\DeclareSymbolFontAlphabet{\mathrsfs}{rsfs}
\definecolor{refkey}{rgb}{0,.8,.2}\definecolor{labelkey}{rgb}{1,0,0}
\numberwithin{equation}{section}
\newtheorem{theorem}{Theorem}[section]
\newtheorem{lemma}[theorem]{Lemma}
\newtheorem{defn}[theorem]{Definition}
\theoremstyle{remark}
\newtheorem{example}[theorem]{Example}
\newtheorem{subexample}{Example}[theorem]
\DeclareMathOperator*{\im}{im}
\newcommand{\Z}{\mathbb{Z}}
\newcommand{\R}{\mathbb{R}}
\def\Z{\mathbb{Z}}
\def\R{\mathbb{R}}
\newcolumntype{C}{>{$}c<{$}}
\begin{document}

\institution{DAMTP}{\centerline{${}^{1}$DAMTP, University of Cambridge, Wilberforce Road, Cambridge, 
 UK}}
\institution{Cav}{\centerline{${}^{2}$Cavendish Laboratory, University of Cambridge, J.~J.~Thomson Ave, Cambridge, UK}}
\institution{DPMMS}{\centerline{${}^{3}$DPMMS, University of Cambridge, Wilberforce Road, Cambridge, UK}}

\title{Differential cohomology and topological actions in physics}

\authors{Joe Davighi,\worksat{\DAMTP}\footnote{E-mail: {\tt jed60@cam.ac.uk}} Ben Gripaios,\worksat{\Cav}\footnote{E-mail: {\tt gripaios@hep.phy.cam.ac.uk}}
and Oscar Randal-Williams \worksat{\DPMMS}\footnote{E-mail: {\tt o.randal-williams@dpmms.cam.ac.uk}}}

\abstract{We use differential cohomology to systematically construct a large class of topological actions in physics, including Chern--Simons terms, Wess--Zumino--Novikov--Witten terms, and theta terms (continuous or discrete). We introduce a notion of invariant differential cohomology and use it to describe theories with global symmetries and we use equivariant differential cohomology to describe theories with gauge symmetries. There is a natural map from equivariant to invariant differential cohomology whose failure to surject detects 't Hooft anomalies, {\em i.e.} global symmetries which cannot be gauged.
 We describe a number of simple examples from quantum mechanics, such as a rigid body or an electric charge coupled to a magnetic monopole. We also describe examples of sigma models, such as those describing non-abelian bosonization in two dimensions, for which we offer an intrinsically bosonic description of the mod-2-valued 't Hooft anomaly that is traditionally seen by passing to the dual theory of Majorana fermions. Along the way, we describe a smooth structure on equivariant differential cohomology and prove various exactness and splitting properties that help with the characterization of both the equivariant and invariant theories.
}

\maketitle

\setcounter{tocdepth}{3}
\tableofcontents

%%%%%%%%%%%%%%%%%%%%%%%%%%%%%%%%%%%%%%%%
%%%%%%%%%%%%%%%%%%%%%%%%%%%%%%%%%%%%%%%%
%%%%%%%%%%%%%%%%%%%%%%%%%%%%%%%%%%%%%%%%
%%%%%%%%%%%%%%%%%%%%%%%%%%%%%%%%%%%%%%%%
%%%%%%%%%%%%%%%%%%%%%%%%%%%%%%%%%%%%%%%%
%%%%%%%%%%%%%%%%%%%%%%%%%%%%%%%%%%%%%%%%
%%%%%%%%%%%%%%%%%%%%%%%%%%%%%%%%%%%%%%%%
\section{Introduction}\label{sec:intro}
%%%%%%%%%%%%%%%%%%%%%%%%%%%%%%%%%%%%%%%%
%%%%%%%%%%%%%%%%%%%%%%%%%%%%%%%%%%%%%%%%
%%%%%%%%%%%%%%%%%%%%%%%%%%%%%%%%%%%%%%%%
%%%%%%%%%%%%%%%%%%%%%%%%%%%%%%%%%%%%%%%%
%%%%%%%%%%%%%%%%%%%%%%%%%%%%%%%%%%%%%%%%
Topological actions\footnote{Unfortunately, the word `action' has very different meanings in mathematics and physics and both meanings feature in this work; we hope that no confusion results.} have come to play an important r\^{o}le in physics. Examples include the Aharonov--Bohm term and the Dirac monopole in quantum mechanics,  Chern--Simons terms and theta terms in gauge theories, and the Wess--Zumino--Novikov--Witten (WZNW) terms occurring in hadronic physics and elsewhere.  Such actions have hitherto mostly been described in an {\em ad hoc} fashion. We will show how all of the above examples, and many more besides, can be described using differential cohomology, a mathematical gadget which is a diffeomorphism invariant of a manifold that refines integral cohomology by information about differential forms, thus merging topological data about the manifold (or rather its homotopy type) with geometrical information in an intricate way.

As well as allowing us to describe topological actions in a systematic way, differential cohomology has a number of other advantages over {\em ad hoc} approaches. One is  that the action obtained is manifestly `topological' in the loose, physicist's sense, in that it is invariant under orientation-preserving diffeomorphisms of the spacetime manifold. 

A second advantage is that a basic necessary requirement of locality is satisfied, namely that the action can be defined on any orientable, compact manifold without boundary, representing spacetime in the euclidean picture. Ideally of course, one would like to go further and show that the theory can be defined on a manifold with boundary and corners of arbitrary codimension, but we will see that this first baby step already yields dividends. Moreover, it is believed that the remaining steps can be carried out \cite{Freed:2699265}.

A third advantage is that the interplay of topological actions with symmetries, be they local or global, can be discussed in a systematic way using equivariant or invariant versions of differential cohomology. Moreover, by studying the surjectivity of a natural map from the equivariant version to the invariant version, one can address the question of whether a global symmetry can be gauged. This is certainly not always the case and indeed there exist counterexamples in which the obstructions are consistent with known anomalies in quantum field theory. The prototypical example is the topological WZNW term in the low-energy effective action describing the strong interactions, where the obstruction to gauging reproduces anomalies in the underlying high energy description via quantum chromodynamics, consistent with the non-renormalization of the anomaly. With a systematic understanding of locally- and globally-symmetric topological actions in hand, 't Hooft's idea of using anomaly matching to understand strongly-coupled dynamics~\cite{tHooft:1979rat} acquires new power, because we can track anomalies even in cases where no fermions are present due to confinement.

A rather trivial, but nonetheless satisfying, version of this phenomenon occurs in theories in one spacetime dimension, where we often
have the luxury of being able to compare with exact quantum mechanical solutions.
For example, we will see that one cannot gauge the $SO(3)$ rotation symmetry of a rigid body in the presence of a topological term that endows it with the properties of a fermion, in the sense that the exponentiated action corresponding to a whole rotation about any axis equals minus one (a similar conclusion was reached by different arguments in~\cite{Gaiotto:2017yup}). This is consistent with the quantum mechanical solution (for a recent treatment, see \cite{Davighi:2019ffp}), which shows that the energy eigenstates of the system have even degeneracy, corresponding to states of half-integer spin. The states thus carry a projective representation 
of $SO(3)$, which leads to an anomaly when we try to gauge it. So, by means of a classical computation, we obtain a result which is an avatar of the spin-statistics theorem in quantum field theory in 4 dimensions. A similar result obtains for a charged particle coupled to a magnetic monopole of odd charge, whose quantum mechanical energy eigenstates also have half-integer spin.

Moving up to two spacetime dimensions, we find another satisfying result: one cannot gauge the $O(n) \times O(n)$ symmetry of the WZNW term of the $O(n) \times O(n)/O(n)$ sigma model. This result was anticipated in Ref.~ \cite{Witten:1983ar}, where it was shown that, for suitable values of the couplings of this term and the usual kinetic term, the bosonic sigma model is dual to a free theory of $n$ Majorana fermions, with $O(n) \times O(n)$ corresponding to the anomalous chiral symmetries.

A fourth advantage of using differential cohomology is that, because its building blocks are familiar objects in algebraic topology, namely integral cohomology and differential forms (or their invariant/equivariant siblings), the formidable apparatus of that subject can be brought to bear in their elucidation. Though we only treat simple examples in this work, the reader will hopefully see that the procedure of constructing actions and identifying the set of possible associated coupling constants is generally straightforward, if one knows enough tricks.

A fifth and final advantage is that the definition of differential cohomology, together with our definition of the physics action, can be extended from the category of smooth manifolds to a larger category whose objects include spaces of smooth maps \cite{bar2014differential}. This implies that the resulting physics action has a notion of smoothness with respect to the degrees of freedom of the field theory. This is not only desirable from the physics point of view, but becomes a necessity if we wish to play the game of classifying field theories and actions. After all, two actions which differ by arbitrarily small amounts cannot be distinguished by experimental measurements of limited precision, so it would be wrong to distinguish them in the classification. 

The outline is as follows. By way of invitation, we describe in \S \ref{sec:dirac} the obstruction to gauging the rotation symmetry of an electrically-charged particle coupled to a monopole of odd charge, by means of an {\em ad hoc} construction that slavishly follows the usual physicist's approach. By doing so, we hope to convince readers that not only is there interesting physics going on in such systems, but also that there ought to be a better way of figuring out what it is. To this end, in \S\S \ref{sec:ord}-\ref{sec:inv} we give axiomatic definitions of ordinary, equivariant, and invariant cohomology theories, describe their connections to topological physics actions with local or global symmetries, and make some preliminary remarks about their mathematical properties and physical consequences thereof. To go further requires us to delve deeper into the mathematical structure of the various differential cohomology theories, which we do in \S\ref{sec:top}. In particular, we show, following \cite{Becker:2014tla}, that equivariant ({\em ergo} ordinary) differential cohomology can be endowed with a smooth structure, in the form of an abelian Lie--Fr{\'e}chet group and describe various smooth exactness and splitting properties of the sequences of maps defining it. In particular, we show that the two short exact sequences in which equivariant differential cohomology sits split smoothly. These results, which may also be of interest to mathematicians, are technically useful to physicists because they enable a concrete characterization of invariant differential cohomology (at least in favourable cases), as we show in \S\ref{sec:char}.  In \S\ref{sec:gau} we describe a map from equivariant to invariant differential cohomology, which corresponds on the physics side to the fact that every locally-symmetric physics action defines a globally-symmetric one, and discuss its features. Along the way, we describe a number of simple examples relevant for physics. 

Finally we take pains to point out that the application of differential cohomology to physics, be it implicit or explicit, is certainly not new; see {\em e.g.} \cite{Wu1976, Alvarez:1984es, gawedzki1988topological, dijkgraaf1990,Freed:1992vw,Freed2002classical, Freed:2004yc ,Freed:2006ya,Freed:2006yc, Freed:2016rqq,Freed:2699265} and references therein. In particular, some of the ideas appearing here have precursors in \cite{Freed:2006mx}, which studied the particular case of the WZNW term in the strong interactions using differential cohomology, albeit in the presence of an additional structure, in the form of a spin structure on spacetime.
%%%%%%%%%%%%%%%%%%%%%%%%%%%%%%%%%%%%%%%%
%%%%%%%%%%%%%%%%%%%%%%%%%%%%%%%%%%%%%%%%
\section{An invitation: gauging Dirac's monopole}\label{sec:dirac}
%%%%%%%%%%%%%%%%%%%%%%%%%%%%%%%%%%%%%%%%
%%%%%%%%%%%%%%%%%%%%%%%%%%%%%%%%%%%%%%%%
By way of invitation, let us consider the physics of an electrically-charged particle moving in the presence of a magnetic monopole. To simplify things, we suppose that the particle is constrained to move on the surface of a 2-sphere $X=S^2$, with the monopole at the centre. Dirac \cite{Dirac:1931kp} showed that consistency requires that the monopole has an integer quantized charge, a condition which was given an elegant interpretation in terms of topological actions by Witten \cite{Witten1983a}, as follows. Let the path of the particle be given by a map $f:S^1 \to X$ from the (euclidean) worldline to the 2-sphere. Since $S^1$ bounds a disk $D^2$ (with an orientation induced by that on $S^1$) and since any map $f$ extends to a map $\overline{f}: D^2 \to X$, one can define a rotationally-invariant topological action by $\int_{D^2} \overline{f}^* \omega$, where $\omega$ is a rotationally-invariant 2-form on $X=S^2$, which is unique up to a scalar. But since there is also an extension in which $D^2$ is mapped to the complement of $\overline{f}(D^2)$ in $X$, we must take care to ensure that the (exponentiated) action be independent of the choice of lift; we therefore must require that $\omega$ has integral periods, restricting the possible actions to $\Z \subset \R$, which we interpret as the allowed monopole charges.

From here, Witten went on to study an analogous term arising in the low-energy effective lagrangian describing the strong interactions. Here the target manifold is diffeomorphic to $SU(3)$ (where 3 corresponds to the number of light quarks) and there is an obvious $SU(3)\times SU(3)$ global symmetry corresponding to the action by left and right translations. Witten showed that it is not possible to gauge this symmetry and showed how this could be linked to chiral anomalies in the underlying high energy theory of QCD.

Our point of departure here is to complete the circle of ideas by returning to the Dirac monopole and asking whether its global symmetry, namely the $SO(3)$ group of rotations, can be gauged. The answer is that it can, but only if the monopole charge is an even multiple of the minimal charge. This result is, on the one hand, surprising, because it cannot be seen by passing to a local description and attempting a brute-force gauging, as Witten did in \cite{Witten1983a}; nor can it be seen by a study of equivariant differential forms, as in \cite{witten1992}. But on the other hand, it should come as no surprise at all, because the quantum mechanics of this problem can be solved exactly (see \cite{Davighi:2019ffp} for a recent treatment) and shows that the energy eigenstates carry a projective representation of $SO(3)$, which is known to lead to problems upon gauging \cite{Nelson:1984gu}.

In fact, there is an obstruction to gauging any connected subgroup of $SO(3)$, so let us try to gauge an $SO(2) \cong U(1)$ subgroup of $SO(3)$ instead. Any such subgroup corresponds to rotations about some axis on $X=S^2$ and leaves two points fixed, which we might as well poetically call the poles. The data of the $U(1)$ gauge theory then consist of a principal $U(1)$-bundle $P$ with connection $\Theta$ over the worldline $S^1$, together with a section of the associated bundle $P\times_{U(1)} X$, or equivalently a $U(1)$-equivariant map $f: P \to X$. From this data, we may try to define an action as follows. The form $\omega$ has a unique closed $U(1)$-equivariant extension $\overline{\omega}$, and given lifts $\overline{P},\overline{\Theta},$ and $\overline{f}$ of $P, \Theta,$ and $f$, respectively, to the disk bounding $S^1$, we can pull $\overline{\omega}$ back to obtain an equivariant 2-form $\overline{f}^*\overline{\omega}$ on $P$. To finish the construction of the action, we use the so-called Cartan map \cite{GuilleminSternberg} together with our connection $\overline{\Theta}$ to obtain a  
2-form on the base $D^2$, which we integrate over the base to get our action. We will give the details of the Cartan map later; for now it is enough to know that it is a homotopy inverse to the chain map
from forms on the base to equivariant forms on the bundle given by pullback along the bundle map.  

Our definition of the action involves many choices. We must check not only that it is possible to make such choices, but also that our definition is independent of how this is done. Existence is easily established: since every $U(1)$-bundle $P$ over $S^1$ is trivializable (that is, isomorphic as a principal bundle to the trivial bundle $U(1)\times S^1$), we can always find an extension $\overline{P}$, which is itself trivializable (a similar story holds for the connection and the equivariant map). But now comes the crucial observation. Even though all such extensions are trivializable, when we compare the result of two different extensions, it is not the case that the difference in the action can be expressed in terms of a trivial $U(1)$-bundle over $M= S^2$. This is perhaps most easily seen by starting from a non-trivial $U(1)$-bundle over $M$ (together with some connection and some equivariant map) and cutting along an $S^1$ in the base; by doing so, we obtain two $U(1)$-bundles over $D^2$ (after pulling back along the respective base inclusions), which are perforce trivializable, and so {\em kosher} lifts. 

To see how this leads to a problem, consider the non-trivial $U(1)$-bundle over $M=S^2$ given by the Hopf bundle $Hf:S^3 \to S^2$. This admits $U(1)$-equivariant maps to the target $X=S^2$, namely the constant maps sending all of $S^3$ to one of the poles. We wish to compute the action (or rather difference in actions) corresponding to such a bundle and equivariant map (the choice of connection will turn out to be irrelevant). To do so, we must first pull back our equivariant form $\overline{\omega}$ along our constant polar map.
Now, in the Cartan complex $\overline{\omega}$ is a sum of the original 2-form $\omega$ together with a linear map from the Lie algebra $\mathfrak{u}(1)$ of $U(1)$ to the space of $0$-forms on $X=S^2$. 

To determine these pieces, we resort to a dirty calculation.\footnote{More fastidious readers may prefer to appeal to the Atiyah--Bott localization formula \cite{ATIYAH19841}, which determines the values at the two poles of the linear map in terms of the integral of $\omega$ over the sphere. Since the reflection in the equator is equivariant but reverses orientation, the values are equal and opposite.} Let $X=S^2$ be the unit sphere in $\R^3$, let the $U(1)$ act by rotation in the ($x_2$--$x_3$)-plane, and take $U(1)$ as $\exp{2\pi i t}$ for $t \in [0,1]$, with $d/dt \in \mathfrak{u}(1)$ as the generator.  Then the unit normalized volume form is $\omega  = (x_1 dx_2 dx_3 - x_2 dx_1 dx_3 + x_3 dx_1 dx_2)/4\pi$ and a simple calculation shows that $\overline{\omega} = \omega - \frac{x_1}{2} dt$ is an equivariant extension. On pulling back along the constant map to (say) the South pole at $x_1=-1$, this gives $\frac{1}{2} dt$. But under the Cartan map, $dt$ represents the first Chern class of the principal U(1)-bundle we started with (this is the Chern--Weil correspondence), so $\frac{1}{2} dt$ yields a 2-form whose integral over the base equals $\frac{1}{2}$, such that the exponentiated actions differ by a sign, so are ill-defined. The argument generalizes immediately to a monopole of arbitrary odd charge.

As we remarked in the Introduction, the obstruction to gauging is fully consistent with what we find from an exact solution of the system in quantum-mechanics (see, {\em e.g.} \cite{Davighi:2019ffp}): for a monopole of charge $g$ in units of the minimal charge, the energy eigenstates have spin $\frac{g}{2}, \frac{g}{2}+1, \frac{g}{2}+2,\dots$, so carry a projective representation of $SO(3)$ when $g$ is odd, so lead to an anomaly upon gauging.

Whilst the problem with our {\em ad hoc} construction of the action can be seen easily enough in this simple case, it is hopefully obvious to the reader that it will become a nightmare for anything but the simplest field theories. This on its own motivates the search a more systematic construction, which differential cohomology will provide. There, the obstruction to gauging in the case of the monopole can be seen straightforwardly either using the algebraic definition of differential cohomology directly, or by using a geometric interpretation thereof that is available in low degrees. Algebraically, the obstruction corresponds to the fact that the map $H^2_{SO(3)}(S^2;\Z) \cong \Z \to H^2(S^2;\Z)^{SO(3)} \cong \Z$ in integral equivariant cohomology is multiplication by two, which can be deduced from the Serre exact sequence; geometrically, it can be seen from the fact that the Hopf bundle $SU(2) \to S^2$ (which corresponds to $1 \in \Z \cong H^2(S^2;\Z)$) does not admit an equivariant action of $SO(3)$, while the bundle $SO(3) \to S^2$ (which corresponds to $2 \in \Z \cong H^2(S^2;\Z)$) obviously does. We will give more details once we have developed the necessary formalism.
%%%%%%%%%%%%%%%%%%%%%%%%%%%%%%%%%%%%%%%%
%%%%%%%%%%%%%%%%%%%%%%%%%%%%%%%%%%%%%%%%
%%%%%%%%%%%%%%%%%%%%%%%%%%%%%%%%%%%%%%%%
\section{Ordinary differential cohomology}\label{sec:ord}
%%%%%%%%%%%%%%%%%%%%%%%%%%%%%%%%%%%%%%%%
Let us begin with a definition of differential cohomology, which we sometimes prefix with the adjective `ordinary' to distinguish it from the invariant and equivariant versions that follow. There are, by now, many equivalent definitions extant in the literature \cite{10.1007/BFb0075216,d6b9f1ee45804fcc88d4fb4c171ed7f7,brylinski2007loop,2005math12251R,Hopkins:2002rd,bunke2016differential}. We prefer an axiomatic one \cite{simons2008axiomatic}, which has the advantage of involving only basic notions of cohomology and differential forms that should be familiar to physicists. By way of preamble, let us recall the basic notions. For $A$ an abelian group, let $H^{\ast}(\cdot;A)$ be the usual cohomology with coefficients in $A$, considered as a (contravariant) functor from the category of smooth manifolds to the category of graded abelian groups. Similarly, let $\Omega^{\ast}(\cdot)$ (respectively $\Omega^{\ast}(\cdot)_\Z$) be the functors representing differential forms (respectively differential forms with integral periods, henceforth referred to simply as `integral forms'). Let 
\begin{equation} \label{eq:lesbock}
\dots \to H^{n-1}(\cdot;\R) \to H^{n-1}(\cdot;\R/\Z) \xrightarrow{b} H^{n}(\cdot;\Z) \to H^{n}(\cdot;\R) \to \dots
\end{equation} 
be the long exact sequence in cohomology associated to the short exact sequence of coefficients $ \Z \hookrightarrow \R \twoheadrightarrow \R/\Z$, with Bockstein map $b$, and let 
\begin{equation} \label{eq:lesdr}
\dots \to H^{n-1}(\cdot;\R) \to \Omega^{n-1}(\cdot)/\Omega^{n-1}(\cdot)_{\Z} \xrightarrow{d}  \Omega^{n}(\cdot)_{\Z} \to H^{n}(\cdot;\R) \to \dots
\end{equation}
be the obvious long exact sequence associated to de Rham's theorem, with exterior derivative $d$.
\begin{defn}[\cite{simons2008axiomatic}, \S 1]
An {\em ordinary differential cohomology theory} is a functor $\widehat{H}^{\ast}(\cdot)$ together with four natural transformations $i$, $j$, $\mathrm{curv}$, and $\mathrm{char}$, such that for any manifold $X$, the diagram 
\begin{equation} \label{eq:character diagram DC}
\begin{tikzcd}[row sep=scriptsize,column sep=tiny]
{} & {} & H^{n-1}(X;\R/\Z) \arrow[rr,"b" description] \arrow[dr, hookrightarrow, "j" description] & {} & H^{n}(X;\Z) \arrow[dr] & {} & {} \\ 
{} & H^{n-1}(X;\R) \arrow[ur] \arrow[dr] & {} &  \widehat{H}^{n}(X) \arrow[ur, twoheadrightarrow, "\mathrm{char}" description] \arrow[dr, twoheadrightarrow, "\mathrm{curv}" description] & {} & H^{n}(X;\R) & {} \\
{} & {} & \Omega^{n-1}(X)/\Omega^{n-1}(X)_{\Z} \arrow[rr,"d" description] \arrow[ur, hookrightarrow, "i" description] & {} & \Omega^{n}(X)_{\Z} \arrow[ur] & {} & {}
\end{tikzcd}
\end{equation}
commutes, with the 2 diagonals in the centre being exact at $\widehat{H}^{n}(X)$. 
\end{defn}
\begin{theorem}[\cite{simons2008axiomatic}, Thm. 1.1]
Ordinary differential cohomology theories exist and are unique up to unique isomorphism.
%An ordinary differential cohomology theory is unique up to unique isomorphism, and existence is established by any of the available models.
\end{theorem}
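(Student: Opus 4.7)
The plan is to prove existence by exhibiting a concrete model and uniqueness by exploiting the rigidity enforced by the character diagram.

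For existence, I would use the classical Cheeger--Simons model of differential characters: define $\widehat{H}^n(X)$ as the abelian group of homomorphisms $\chi : Z_{n-1}(X;\Z) \to \R/\Z$ from smooth singular $(n-1)$-cycles such that there exists a (unique, closed) form $\omega_\chi \in \Omega^n(X)_\Z$ with $\chi(\partial c) = \int_c \omega_\chi \bmod \Z$ for every smooth $n$-chain $c$. The four natural transformations are then: $\mathrm{curv}(\chi) := \omega_\chi$; $\mathrm{char}(\chi)$ is the integral cohomology class represented by any real cochain lifting $\chi$ (its coboundary being automatically integral); $j$ sends an element of $H^{n-1}(X;\R/\Z)$ to the character it induces on cycles; and $i$ sends $[\alpha] \in \Omega^{n-1}(X)/\Omega^{n-1}(X)_\Z$ to the character $c \mapsto \int_c \alpha \bmod \Z$. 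Commutativity of the character diagram and exactness of its two diagonals at $\widehat{H}^n(X)$ are then bookkeeping verifications using Stokes' theorem and the de Rham isomorphism.

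For uniqueness, suppose $\widehat{H}^*_1$ and $\widehat{H}^*_2$ are two such theories with their own structure maps. Exactness of the diagonals, together with the injectivity of $i, j$ and surjectivity of $\mathrm{curv}, \mathrm{char}$ built into the axioms, furnishes for each $k \in \{1,2\}$ the short exact sequences
\begin{equation*}
0 \to H^{n-1}(X;\R/\Z) \xrightarrow{j_k} \widehat{H}^n_k(X) \xrightarrow{\mathrm{curv}_k} \Omega^n(X)_\Z \to 0,
\end{equation*}
\begin{equation*}
0 \to \Omega^{n-1}(X)/\Omega^{n-1}(X)_\Z \xrightarrow{i_k} \widehat{H}^n_k(X) \xrightarrow{\mathrm{char}_k} H^n(X;\Z) \to 0.
\end{equation*}
I would construct a natural map $T: \widehat{H}^n_1 \to \widehat{H}^n_2$ as follows: for $\hat u \in \widehat{H}^n_1(X)$, let $c = \mathrm{char}_1(\hat u)$ and $\omega = \mathrm{curv}_1(\hat u)$; pick any $\hat v \in \widehat{H}^n_2(X)$ with $\mathrm{char}_2(\hat v) = c$; observe that $\mathrm{curv}_2(\hat v) - \omega = d\beta$ is exact since both represent the real image of $c$; and set $T(\hat u) := \hat v - i_2([\beta])$, which manifestly has the correct curvature and characteristic class.

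The main obstacle---and the heart of the uniqueness statement---is to show that $T$ is independent of the choices of $\hat v$ and $\beta$, natural in $X$, and the \emph{unique} such transformation commuting with all four structure maps. The indeterminacy in $T(\hat u)$ lies in $\ker(\mathrm{curv}_2) \cap \ker(\mathrm{char}_2)$, which a short diagram chase combining (\ref{eq:lesbock}) and (\ref{eq:lesdr}) identifies with the common image of $H^{n-1}(X;\R)$ in $\widehat{H}^n_2(X)$ via either $j_2 \circ \mathrm{red}$ or $i_2 \circ \mathrm{dR}$; commutativity of the leftmost diamond in the character diagram forces these two descriptions to coincide and so pins $T$ down canonically. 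Naturality in $X$ then follows because all auxiliary choices can be made from preimages of natural data. Finally, the five lemma applied to either short exact sequence above---with $T$ in the middle and the identity on the outer terms---shows $T$ is an isomorphism, with a two-sided inverse supplied by running the same construction in the opposite direction.
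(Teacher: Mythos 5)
The paper does not prove this statement; it is imported verbatim from Simons--Sullivan \cite{simons2008axiomatic}, so your proposal must be judged against that proof. Your existence argument via Cheeger--Simons differential characters is the standard one and is fine (the verifications you defer really are routine applications of Stokes' theorem and the de Rham isomorphism).

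The uniqueness argument, however, has a genuine gap, and it sits exactly at the point you flag as ``the heart of the uniqueness statement.'' Your map $T(\hat u) := \hat v - i_2([\beta])$ is not well-defined: fixing $\hat v$, the form $\beta$ with $d\beta = \mathrm{curv}_2(\hat v) - \omega$ is only determined up to a closed $(n-1)$-form $\gamma$, and changing $\beta$ to $\beta + \gamma$ changes $T(\hat u)$ by $i_2([\gamma])$. Since $i_2$ is injective, the indeterminacy is the full torus $\Omega^{n-1}(X)_{\mathrm{cl}}/\Omega^{n-1}(X)_\Z \cong H^{n-1}(X;\R)/H^{n-1}(X;\R)_\Z$, which is nonzero whenever $H^{n-1}(X;\R) \neq 0$ (already for $n=2$, $X=S^1$, the recipe is ambiguous by an arbitrary shift of holonomy). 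Identifying \emph{where} the indeterminacy lives --- as the common image of $H^{n-1}(X;\R)$ via $j_2$ or $i_2$ --- does not eliminate it; commutativity of the leftmost diamond tells you the two embeddings of that torus agree, not that the ambiguous element is canonically zero. Nor can you quotient it away: you need an actual homomorphism $\widehat{H}^n_1(X) \to \widehat{H}^n_2(X)$, and a pointwise choice within each coset will not automatically be additive or natural. This is precisely why the Simons--Sullivan theorem is nontrivial: their proof does not compare the two theories directly coset-by-coset, but instead uses naturality over the entire category of manifolds in an essential way --- roughly, any theory satisfying the axioms restricts on manifolds of dimension $n-1$ to $H^{n-1}(-;\R/\Z)$ (since there both $\Omega^n(\cdot)_\Z$ and $H^n(\cdot;\Z)$ vanish and $j$ becomes an isomorphism), and evaluating pullbacks along maps from such test manifolds produces a canonical natural transformation to the differential-character model, which the exact sequences and the five lemma then show is an isomorphism. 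Without some such input beyond the two short exact sequences on a single fixed $X$, the extension of $H^n(X;\Z)$ by $\Omega^{n-1}(X)/\Omega^{n-1}(X)_\Z$ is simply not rigid enough to pin down $T$, and your final five-lemma step has nothing well-defined to apply to.
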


Thus, to refer to {\em the} differential cohomology, as we frequently do in sequel, is but a {\em peccadillo}.

Let us now make some remarks about differential cohomology. The diagram (\ref{eq:character diagram DC}) formalises our earlier assertion that differential cohomology refines the integral cohomology of a manifold with information about differential forms: the map $\text{char}$ surjects onto $H^{n}(X;\Z)$, with a kernel given by an equivalence class of $(n-1)$-forms. A key property of differential cohomology is that given a fibre bundle $E \to B$ with closed oriented fibre $F$ of dimension $m$, there exists a notion of fibre integration ({\em cf. e.g.} \cite{bar2014differential}), namely a map $\int_{F}: \widehat{H}^\ast (E) \to \widehat{H}^{\ast - m}(B)$, which is compatible with the corresponding maps on cohomology and differential forms. This map can be extended to fibres with boundary and an important special case is the homotopy formula: given $h \in \widehat{H}^\ast (X)$ and a smooth homotopy $F:[0,1] \times Y \to X$ of maps $F_{0}, F_1:Y \to X$, we have
 $$F_1^*h - F_0^*h = i\int_{[0,1]} F^* \text{curv} \; h$$
where the integral denotes the usual fibre integration of differential forms. This formula not only shows that differential cohomology is not a homotopy invariant, but also encodes its variation under homotopies in an explicit way, making differential cohomology a powerful diffeomorphism invariant of manifolds. We make use of the homotopy formula in \S \ref{sec:char}. 

To see how to define a physics action using differential cohomology, suppose we have a
physical system where spacetime is oriented and has dimension $p$ and where we have a fixed target manifold $X$. Given a spacetime, in the form of an oriented, closed $p$-manifold $M$, the degrees of freedom of the theory are then smooth maps $f:M \to X$. Given an element $h \in \widehat{H}^{p+1}(X)$, we define the physics action (or rather its exponential $e^{2\pi i S}$) as follows. Using the map $f$, we form the pullback $f^*h \in \widehat{H}^{p+1}(M)$. Since $M$ is a $p$-manifold, $\Omega^{p+1}(X)_{\Z}$ vanishes, so the diagram (\ref{eq:character diagram DC}) (with the obvious replacements $X \leadsto M$ and $n \leadsto p+1$) shows that the map $j$ is in fact an isomorphism. We may thus form $j^{-1} f^*h \in H^{p}(X;\R/\Z)$. Since $M$ has an orientation, it has a fundamental class $[M]$, and so we obtain an element in $\R/\Z$ by evaluating $j^{-1} f^*h$ on $[M]$ using the canonical pairing of homology and cohomology. Exponentiating this element leads to a well-defined value for $e^{2\pi i S}$.

Equivalently, since $H^{p+1}(M;\Z)$ vanishes as well, we can obtain our exponentiated action by integrating a representative $p$-form in $i^{-1}f^* h \in \Omega^{p}(M)/\Omega^{p}(M)_{\Z}$ over $M$, and noting that this is well-defined on classes once we reduce modulo $\Z$. 

Evidently, our construction is valid on any closed, oriented spacetime manifold $M$. Moreover, it is clear that the construction requires only these structures, together with the map to $X$. The action is thus `topological', in the sense commonly used by physicists. 

Our construction shows that, far from being rare, there are many such actions associated to a given physical system. One way to see this is to give an explicit geometrical interpretation of differential cohomology in low degrees. In degree one, for example, the abelian group of differential cohomology on $X$ is isomorphic to the abelian group of smooth functions $g:X \to U(1)$ (the map $\text{char}$ sends $g$ to its homotopy class, while the map $\text{curv}$ sends $g$ to its derivative). Physically, this corresponds to the rather boring case of a 0-dimensional field theory, in which spacetime is a finite disjoint union of points. Each of these is sent by $f$ to a point in $X$ and the action is given by summing the values of $gf$ over the points.  

Things are somewhat more interesting in degree two, where differential cohomology on $X$ is isomorphic to the abelian group of isomorphism classes of principal $U(1)$-bundles on $X$ with connection. Physically, this corresponds to the quantum mechanics of a particle whose worldline traces out a loop in the target space $X$. The $U(1)$-bundle with connection represent a background magnetic field on the space $X$ and the action corresponding to a worldline is given by the holonomy of the connection. 

More generally, in spacetime dimension $p$, it is obvious that one way to get a topological action is to take a $p$-form on $X$, pull it back to $M$ using $f$, and integrate it over $M$ (this corresponds to the inclusion $j$ in (\ref{eq:character diagram DC})). The integral $p$-forms yield trivial exponentiated action, but since a generic $X$ has many non-closed ({\em ergo} non-integral) $p$-forms, we see that there will be many topological actions of this kind.  One way to model differential cohomology is as a generalization of such globally-defined forms to locally-defined forms that patch together consistently~\cite{Davighi2018}.

As we remarked in the Introduction, our definition of differential cohomology can be extended to a larger category whose objects include spaces of smooth maps \cite{bar2014differential}, as can the notion of fibre integration. This allows the following alternative definition of the physics action. Given a closed, oriented spacetime $M$ of dimension $n$ and a target $X$, let $X^M$ denote the space of smooth maps $M \to X$ and let $\text{ev}:X^M \times M \to X$ denote the evaluation map. Then, given an element $h \in \widehat{H}^{n+1}$, we can pull back along $\text{ev}$ and integrate along the fibre $M$ of the trivial bundle $X^M \times M \to X^M$ to obtain an element in $\int_M \text{ev}^*h \in \widehat{H}^{1}(X^M)$. According to the geometric interpretation of differential cohomology in degree one just given, this is a smooth map (in the generalized sense) from $X^M$ to $U(1)$, giving an equivalent definition of the action that is manifestly smooth with respect to variations of the degrees of freedom, namely the fields $f \in X^M$.
%%%%%%%%%%%%%%%%%%%%%%%%%%%%%%%%%%%%%%%%
%%%%%%%%%%%%%%%%%%%%%%%%%%%%%%%%%%%%%%%%
%%%%%%%%%%%%%%%%%%%%%%%%%%%%%%%%%%%%%%%%
%%%%%%%%%%%%%%%%%%%%%%%%%%%%%%%%%%%%%%%%
%%%%%%%%%%%%%%%%%%%%%%%%%%%%%%%%%%%%%%%%
%%%%%%%%%%%%%%%%%%%%%%%%%%%%%%%%%%%%%%%%
%%%%%%%%%%%%%%%%%%%%%%%%%%%%%%%%%%%%%%%%
\section{Equivariant differential cohomology and local symmetry}\label{sec:equ}
%%%%%%%%%%%%%%%%%%%%%%%%%%%%%%%%%%%%%%%%
Let us now begin the discussion of the interplay between topological actions and symmetries. Such symmetries may be local or global. While there are arguments that suggest that there can be no global symmetries in a fundamental theory of quantum gravity, this is certainly not true for effective descriptions of Nature, where global symmetries (albeit often approximate) abound and indeed play a dominant r\^{o}le in determining the low-energy dynamics. 

Given that local symmetries are somehow more fundamental than global ones, it is a reasonable guess that they admit a more straightforward (or at least more natural) mathematical description and indeed this turns out to be the case here. As we will see, the right mathematical gadget is the generalization of ordinary differential cohomology to the equivariant setting.

As before, we suppose that we have a
physical system where spacetime is oriented and has dimension $p$, with a fixed target manifold $X$. But now we suppose that we have, in addition, a smooth action of a Lie group $G$ on $X$. Here we shall assume that $G$ is compact, as is commonly the case in gauge theory (this assumption will be relaxed when we discuss global symmetries in the next Section). Given a spacetime, to wit a closed $p$-manifold $M$, the degrees of freedom of a gauge theory with gauge group $G$ are a principal $G$-bundle $P$  over $M$ with connection $\Theta$, together with a section $f$ of the associated bundle $P \times_G X$ (usually referred to by physicists as a `matter field'). Such sections are in 1-1 correspondence with equivariant maps $P \to X$. 

Now let us define equivariant differential cohomology and give a prescription for constructing topological actions with symmetry therefrom. As with ordinary differential cohomology, a variety of equivalent definitions are now available \cite{kubel1510equivariant,redden2016differential} (see also \cite{gomi2005}). As with ordinary differential cohomology, we find it most convenient to choose an axiomatic definition \cite{redden2016differential}, whose basic ingredients we now describe. 

Ringing the changes, for an abelian group $A$ and a compact Lie group $G$, let $H^{\ast}_G(\cdot;A) := H^{\ast}(EG \times_G \cdot;A)$ be the usual Borel construction of equivariant cohomology considered as a (contravariant) functor from the category of smooth $G$-manifolds to the category of graded abelian groups and let
\begin{equation} \label{eq:elesbock}
\dots \to H_G^{n-1}(\cdot;\R) \to H_G^{n-1}(\cdot;\R/\Z) \xrightarrow{b_G} H^{n}_G(\cdot;\Z) \to H^{n}_G(\cdot;\R) \to \dots
\end{equation} 
be the long exact sequence in cohomology associated to the short exact sequence of coefficients $ \Z \hookrightarrow \R \twoheadrightarrow \R/\Z$, with Bockstein map $b_G$.

For the equivariant version of the de Rham sequence, we need the Cartan complex\footnote{One may equivalently use the Weil complex, which Mathai and Quillen have shown \cite{MATHAI198685} to be isomorphic.} of equivariant differential forms on a manifold $X$, $\Omega_G^\ast (X) := [S^\ast(\mathfrak{g}^\vee) \otimes \Omega^\ast(X)]^G$ (equivalently, the Cartan complex consists of the $G$-equivariant polynomial maps $\mathfrak{g} \to \Omega^\ast (X)$), also considered as a functor $\Omega_G^\ast (\cdot)$ from the category of smooth $G$-manifolds to the category of graded abelian groups, with grading given by the differential form degree plus twice the polynomial degree. The differential is given by $d_G \omega (v) := d\omega (v) + \iota_v \omega (v)$, where $v$ denotes either an element of the Lie algebra $\mathfrak{g}$ of $G$ or the corresponding fundamental vector field on $X$ and $\iota$ denotes the contraction of a form with a vector field. 

The equivariant de Rham theorem asserts that the cohomology of the complex $\Omega_G^\ast(X)$ under $d_G$ is isomorphic to $H^{\ast}_G(X;\R)$. Letting 
$\Omega_G^{\ast}(\cdot)_\Z$ denote the subfunctor of $\Omega_G^{\ast}(\cdot)$ that assigns the 
subgroup of equivariant forms whose image in $H^{\ast}_G(\cdot;\R)$ via the equivariant de Rham theorem intersects the image of $H^{\ast}_G(\cdot;\Z)$ under the obvious inclusion, we have the long exact sequence
\begin{equation} \label{eq:elesdr}
\dots \to H_G^{n-1}(\cdot;\R) \to \Omega_G^{n-1}(\cdot)/\Omega_G^{n-1}(\cdot)_{\Z} \xrightarrow{d_G}  \Omega_G^{n}(\cdot)_{\Z} \to H_G^{n}(\cdot;\R) \to \dots
\end{equation}

Just as for ordinary differential cohomology, our definition of equivariant differential cohomology will make use of these two exact sequences. But we will need a further ingredient. 
Given a principal $G$-bundle $P$ on a manifold $M$, we have an isomorphism $H^\ast_G(P;A) \cong H^\ast(M;A)$, since the $G$-action on $P$ is free. Given, furthermore, a connection $\Theta$ on $P$, we have the Cartan map \cite{GuilleminSternberg} $\Theta^*: \Omega^\ast_G(P) \to \Omega^\ast(M)$, which may be constructed as follows. By evaluating a polynomial in $\Omega^\ast_G(P)$ on the curvature of the connection $\Theta$, we obtain a differential form on $P$ of the same degree. This form is basic, meaning that it is both $G$-invariant and horizontal, {\em i.e.} when evaluated on tangent vectors, it yields zero if any of those vectors are tangent to a fibre. But such basic forms are isomorphic to forms on the base, with the isomorphism given by pullback along the bundle map. In this way, we obtain a chain map $\Theta^*: \Omega^\ast_G(P) \to \Omega^\ast(M)$ which turns out to be a homotopy inverse to the composite map $\Omega^\ast(M) \to \Omega^\ast_{\text{basic}}(P) \to  \Omega^\ast_G(P)$, where the first map is pullback along the bundle map and the second map is inclusion. Thus, given the data $(M,P,\Theta)$ we can construct maps for all of the objects in the outer hexagon of the equivariant version of the diagram (\ref{eq:character diagram DC}). It is thus natural to ask that a corresponding map exist for equivariant differential cohomology and this will form the second part of the definition.

We thus arrive at the following
\begin{defn}[\cite{redden2016differential}, Prop. 4.18]
Let $M$ be a smooth manifold, let $P$ be a principal $G$-bundle over $M$ with connection $\Theta$, and let $X$ be a $G$-manifold.

An {\em equivariant differential cohomology theory} is a functor $\widehat{H}_G^{\ast}(\cdot)$, together with four natural transformations $i_G$, $j_G$, $\mathrm{curv}_G$, and $\mathrm{char}_G$, such that
\begin{enumerate}
\item[a.] for any $G$-manifold $X$, the diagram
\begin{equation} \label{eq:character diagram EDC}
\begin{tikzcd}[row sep=scriptsize,column sep=tiny]
{} & {} & H_G^{n-1}(X;\R/\Z) \arrow[rr,"b_G" description] \arrow[dr, hookrightarrow, "j_G" description] & {} & H_G^{n}(X;\Z) \arrow[dr] & {} & {} \\ 
{} & H_G^{n-1}(X;\R) \arrow[ur] \arrow[dr] & {} &  \widehat{H}_G^{n}(X) \arrow[ur, twoheadrightarrow, "\mathrm{char}_G" description] \arrow[dr, twoheadrightarrow, "\mathrm{curv}_G" description] & {} & H_G^{n}(X;\R) & {} \\
{} & {} & \Omega_G^{n-1}(X)/\Omega_G^{n-1}(X)_{\Z} \arrow[rr,"d_G" description] \arrow[ur, hookrightarrow, "i_G" description] & {} & \Omega_G^{n}(X)_{\Z} \arrow[ur] & {} & {}
\end{tikzcd}
\end{equation}
commutes, with the 2 diagonals in the centre being exact at $\widehat{H}_G^{n}(X)$, and
\item[b.] for any manifold $M$ and principal $G$-bundle $P \to M$ with connection $\Theta$, there exists a map $\Theta^*: \widehat{H}_G^{n}(P) \to \widehat{H}^{n}(M)$ compatible with the diagrams (\ref{eq:character diagram EDC}) and (\ref{eq:character diagram DC}) and the maps induced by the Cartan map $\Omega^\ast_G(P) \to \Omega^\ast(M)$ and the isomorphism $H^\ast_G(P;\Z)\to H^\ast (M;\Z)$.
\end{enumerate}
\end{defn}
\begin{theorem}[\cite{redden2016differential}, Prop. 4.18]
Equivariant differential cohomology theories exist, and are unique up to unique isomorphism.
%An equivariant diifferential cohomology theory is unique up to unique isomorphism, and existence is established by any of the available models.
\end{theorem}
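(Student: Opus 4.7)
The plan is to mirror the proof for ordinary differential cohomology: I would produce existence via a Hopkins--Singer style fibre-product model built from the Cartan complex together with Borel equivariant cohomology, and I would establish uniqueness by a five-lemma argument on the two exact diagonals of \eqref{eq:character diagram EDC}, with axiom (b) used only at the end to pin down naturality.

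For existence I would define, at the cochain level,
\[
\widehat{C}_G^n(X) := \bigl\{(c,\omega,\eta) : c\in C^n_G(X;\Z),\ \omega\in\Omega^n_G(X)_\Z,\ \eta\in C^{n-1}_G(X;\R),\ \delta\eta = c_\R - \mathrm{dR}(\omega)\bigr\},
\]
where $\mathrm{dR}$ is a fixed chain-level incarnation of the equivariant de Rham comparison, and then set $\widehat{H}_G^n(X):=H^n(\widehat{C}_G^*(X))$ with $i_G,j_G,\mathrm{curv}_G,\mathrm{char}_G$ the evident projections and inclusions. Functoriality is immediate, and commutativity and exactness of the hexagon is a direct diagram chase from the equivariant de Rham theorem together with the exact sequences \eqref{eq:elesbock} and \eqref{eq:elesdr}, exactly as in the non-equivariant case. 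For axiom (b), the topological isomorphism $H^*_G(P;\Z)\cong H^*(M;\Z)$ (valid because $G$ acts freely on $P$) combines with the Cartan map $\Omega^*_G(P)\to\Omega^*(M)$, and the two agree on real cohomology by Chern--Weil. Lifting this to the fibre-product model yields the desired $\Theta^*:\widehat{H}_G^n(P)\to\widehat{H}^n(M)$, and the fact that it intertwines the natural transformations with their ordinary counterparts is built in.

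For uniqueness, given two candidate theories $\widehat{H}_G^*$ and $\widetilde{H}_G^*$ with their prescribed natural transformations, each of the two exact diagonals of \eqref{eq:character diagram EDC} presents $\widehat{H}_G^n(X)$ as an extension, respectively of $\Omega^n_G(X)_\Z$ by a quotient of $H_G^{n-1}(X;\R/\Z)$, and of $H^n_G(X;\Z)$ by a quotient of $\Omega^{n-1}_G(X)/\Omega^{n-1}_G(X)_\Z$. A standard five-lemma argument produces a unique natural map $\widehat{H}_G^n(X)\to\widetilde{H}_G^n(X)$ compatible with $i_G,j_G,\mathrm{curv}_G,\mathrm{char}_G$; compatibility with axiom (b) is then automatic via the uniqueness of ordinary differential cohomology from the previous theorem.

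The main obstacle is the algebraic bookkeeping of the equivariant de Rham theorem at the cochain level: while the comparison is classical on cohomology, the fibre-product construction requires a genuine chain-level natural transformation $\Omega_G^*(X)\to C^*(EG\times_G X;\R)$, which must be chosen carefully (for example through the Weil model with a specific smooth model for $EG$) and whose naturality both in $X$ and in the connection data entering axiom (b) must be controlled. Once this comparison is fixed, everything else is formal and parallels the ordinary case essentially verbatim.
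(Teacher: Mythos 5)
The paper does not prove this statement at all: it is quoted verbatim from Redden's work (Prop.~4.18 of the cited reference), so there is no in-paper argument to compare against. Judged on its own terms, your existence sketch is a reasonable reconstruction of the known constructions (it is closest in spirit to the Weil/Cartan-model fibre-product approach of K{\"u}bel--Thom; Redden's own model is built instead as a limit of ordinary differential cohomology over finite-dimensional approximations to $EG$ with connection, which is what makes axiom (b) essentially tautological there). You correctly identify the real technical crux: one needs a genuinely natural chain-level comparison $\Omega^*_G(X)\to C^*(EG\times_G X;\R)$, which in practice is obtained from universal connections on finite-dimensional approximations of $EG$ together with Chern--Weil theory, and which is also needed just to make sense of $\Omega^*_G(X)_\Z$ consistently. (A small wrinkle: as written your $\widehat{C}^n_G(X)$ is a set of cocycles, not a term of a cochain complex, since you impose $\omega\in\Omega^n_G(X)_\Z$ in every degree; the Hopkins--Singer device is to constrain the form only in the top degree and truncate, so that $H^n$ of the complex makes sense.)

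The genuine gap is in the uniqueness argument. The five lemma does not \emph{produce} a natural transformation between two candidate theories; it only certifies that a transformation you have already constructed, and already know to be compatible with both diagonals, is an isomorphism. The entire content of the Simons--Sullivan-style uniqueness theorem is the construction of that transformation and the proof that it is unique, which uses more than the two extensions in isolation: one shows that $(\mathrm{char}_G,\mathrm{curv}_G)$ exhibits $\widehat{H}^n_G(X)$ as an extension of the fibre product $H^n_G(X;\Z)\times_{H^n_G(X;\R)}\Omega^n_G(X)_\Z$ by the torus $H^{n-1}_G(X;\R)/H^{n-1}_G(X;\R)_\Z$, and then exploits naturality over \emph{all} $G$-manifolds (together with the homotopy formula and the behaviour on classes pulled back from forms and from integral classes) to rigidify this extension. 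Without that step your ``unique natural map'' is asserted rather than obtained. Separately, compatibility with axiom (b) is part of the uniqueness statement and requires the isomorphism to be natural in the triple $(M,P,\Theta)$, not just in $X$; this does not follow formally from the uniqueness of ordinary differential cohomology alone, since the latter says nothing about how $\Theta^*$ interacts with the Cartan map on the equivariant side.
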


With the definition complete, we now describe the construction of the physics action. Recall that our gauge theory data consist of 
a closed, oriented $p$-manifold $M$, a principal $G$-bundle $P \to M$ with connection $\Theta$, and a $G$-equivariant map $f:P \to X$. Given an element $h_G \in \widehat{H}^{p+1}_G(X)$, we first form the pullback $f^*h_G \in \widehat{H}^{p+1}_G(P)$. But now we can use the map $\Theta^*: \widehat{H}^{p+1}_G(P) \to \widehat{H}^{p+1}(M)$ used in the definition to get an element in ordinary differential cohomology of degree $p+1$. From here, we can use exactly the same arguments that we made in the previous section to obtain the physics action. In summary, the action is $\langle j^{-1} \Theta^* f^* h_G , [M] \rangle \in \R/\Z$, where the angled brackets denote the canonical pairing between homology and cohomology.

In fact, we can make a much more explicit construction of the action, which will be useful in discussing practical examples. Namely, consider $f^*h_G \in \widehat{H}^{p+1}_G(P)$. Because $P$ is a principal $G$-bundle, the $G$-action on it is free, and so the equivariant integral cohomology is given by $H^{p+1}_G(P) \cong H^{p+1}(P/G) \cong H^{p+1}(M) = 0$, since $M$ is a $p$-manifold. Thus, the map $i_G$ has an inverse and we can form $i_G^{-1} f^*h_G \in \Omega_G^{p}(P)/\Omega_{G}^{p}(P)_{\Z}$. From here, we can use the connection $\Theta$ to construct the Cartan map $\Omega_G^\ast (P) \to \Omega^\ast (M)$ which we follow to obtain a form in $\Omega^\ast (M)$ whose degree coincides with that of the original element in $\Omega_G^\ast (P)$. In the case at hand, the Cartan map sends a representative for $i_G^{-1} f^*h_G \in \Omega_G^{p}(P)/\Omega_{G}^{p}(P)_{\Z}$ to a $p$-form on $M$, which can be integrated over $M$ to obtain an exponentiated action that is independent of the choice of representative. 

An important feature of equivariant differential cohomology is the following. The unique map that sends all of $X$ to a point is $G$-equivariant and so provides a map 
\begin{equation}\label{eq:MapToPt}
\widehat{H}^*_G(\text{pt}) \to \widehat{H}^*_G(X).
\end{equation}
In the special case of ordinary differential cohomology (with $G$ the trivial group), we have $\widehat{H}^*(\text{pt}) \cong \Z$, so nothing new results. But, as we shall soon see, the equivariant differential cohomology of a point is non-trivial (indeed, it is responsible for all topological terms in pure gauge theory, such as Chern--Simons and theta terms) and the map \eqref{eq:MapToPt} can be nontrivial. Indeed, this map can fail to be injective, meaning there is no sense in which the locally symmetric topological actions for a $G$-manifold $X$ `contain' the pure gauge theory actions, as Example \ref{sec:TranslationAct} below shows. 

As for ordinary differential cohomology, it is perhaps helpful to give a geometric description of equivariant differential cohomology in low degrees. In degree one, it is isomorphic to the abelian group of $G$-invariant maps from $X$ to $U(1)$, while in degree two it is isomorphic to the abelian group of isomorphism classes of $G$-equivariant principal $U(1)$-bundles on $X$ equipped with a $G$-invariant connection. 

\begin{example}[Pure gauge theory]
When $X$ is a point, ordinary differential cohomology is trivial, but equivariant differential cohomology is not. Indeed, the Cartan complex reduces to (invariant) polynomial maps $\mathfrak{g} \to \R$ where the variable has even degree. Thus, for $n$ even (corresponding to odd spacetime dimension), one diagonal in the diagram (\ref{eq:character diagram EDC}) yields
$$\widehat{H}^n_G(\mathrm{pt}) \cong H^n_G(\mathrm{pt};\Z) \cong H^{n}(BG;\Z)$$
coinciding with the celebrated classification of Dijkgraaf and Witten \cite{dijkgraaf1990 } of Chern--Simons terms in pure gauge theory. Moreover, reading along the other diagonal we have a short exact sequence 
$ H^{n-1} (BG;\R/\Z) \hookrightarrow H^{n}(BG;\Z) \twoheadrightarrow S^{n/2}(\mathfrak{g}^\vee)^G,$
which, along with the observation that $H_{n-1}(BG;\Z)$ is torsion as $n$ is even, provides the starting point for their construction of the action. 

Similarly,
for $n$ odd (corresponding to even spacetime dimension), the diagram (\ref{eq:character diagram EDC}) yields
$$\widehat{H}^n_G(\mathrm{pt}) \cong H^{n-1}_G(\mathrm{pt};\R/\Z) \cong H^{n-1}(BG;\R/\Z),$$
characterizing so-called theta terms in pure gauge theory. Here, a rather more straightforward construction of the action is available: we can simply forget the connection and note that isomorphism classes of principal $G$-bundle on $M$ are in 1-1 correspondence with homotopy classes of maps from $M$ to $BG$. Thus, the action can be obtained by taking a representative $g \in [M,BG]$ and evaluating $g_* [M] \in H_{n-1}(BG;\Z)$ against the desired class in $H^{n-1}(BG;\R/\Z)$ using the canonical pairing between homology and cohomology. \qed
\end{example}
Let us now discuss some particular cases of this example. 
\begin{subexample}[Finite groups]\label{sec:exgamma}
%%%%%%%%%%%%%%%%%%%%%%%%%%%%%%%%%%%%%%%%
When $G=\Gamma$ is a finite group, the Cartan complex is trivial and we get, for all $n$,
$$\widehat{H}^n_\Gamma (\mathrm{pt}) \cong H^{n}(B\Gamma;\Z) \cong H^{n-1}(B\Gamma;\R/\Z),$$ 
corresponding to the group cohomology. Since connections on such bundles are unique, it comes as no surprise that the action for all $n$ may be obtained using the construction for odd $n$ just given. \qed
\end{subexample}
%%%%%%%%%%%%%%%%%%%%%%%%%%%%%%%%%%%%%%%%
\begin{subexample}[Tori]\label{sec:exu1}
%%%%%%%%%%%%%%%%%%%%%%%%%%%%%%%%%%%%%%%%
For $G=U(1)$, $\mathbb{C}P^\infty$ is a model for $BU(1)$ and so we get
$$\widehat{H}^\ast_{U(1)}(\mathrm{pt}) = \Z \oplus \R/\Z \oplus \Z \oplus \R/\Z \oplus \dots $$
Thus, in odd spactime dimensions we get a Chern-Simons term with integer coupling. 
To see why this `quantization of the coupling' is necessary, consider a trivial $U(1)$-principal bundle over an $M$ containing a non-contractible $S^1$. Since the bundle is trivial, every connection can be pulled back to a 1-form $A_\sigma$ on $M$ along a global section $\sigma$ and from this one may construct a form $A_\sigma \wedge d A_\sigma \wedge d A_\sigma \wedge \dots$ on $M$ of top degree and integrate it over $M$. {\em A priori} any real multiple of the integral (reduced modulo $\Z$) yields an exponentiated action, but we must ensure that the result is independent of the choice of global section (which was not part of the given data).
Choosing two sections which differ in their winding around the fibre as they wind around the $S^1$ shows that the coupling must be integer quantized.

In even spacetime dimensions, we have already given a general construction of the action without using the connection, but one can also give a construction which uses it. To wit, one takes a wedge product of the curvature 2-form on $M$ to obtain a top-degree form on $M$ and integrates over $M$. By the Chern-Weil correspondence, the integral is an integer (and, moreover, is independent of the connection) and coincides with our earlier construction.

For multiple $U(1)$ factors, we may use the fact that $BG \times BH$ is a model for $B(G \times H)$, together with the K\"{u}nneth formula. \qed
\end{subexample}
%%%%%%%%%%%%%%%%%%%%%%%%%%%%%%%%%%%%%%%%
\begin{example}[Actions by translations]\label{sec:TranslationAct}
%%%%%%%%%%%%%%%%%%%%%%%%%%%%%%%%%%%%%%%%
When $G$ acts on itself by translations, we have that $H^\ast_G(G;A) = H^\ast(G/G;A) = H^\ast(\mathrm{pt};A) \cong A$. From the diagram, we then read off 
$$\widehat{H}^0_G(G) \cong \Z,\quad \widehat{H}^1_G(G) \cong \R/\Z,\quad \widehat{H}^{n \geq 2}_G(G) \cong d_G \Omega_G^{n-1}(G) \cong  \Omega_G^{n-1}(G)/d_G \Omega_G^{n-2}(G).$$

So let us examine the Cartan complex $\Omega^\ast_G(G)$. Evaluation of differential forms at the identity $e \in G$ gives a map
$$\text{ev}: \Omega^\ast_G(G) = [S^\ast(\mathfrak{g}^\vee) \otimes \Omega^\ast(G)]^G \to S^\ast(\mathfrak{g}^\vee) \otimes \Lambda^\ast(\mathfrak{g}^\vee).$$
This is an isomorphism, since if $F : \mathfrak{g} \to \Omega^p(G)$ is a $G$-equivariant polynomial map then for any $v \in \mathfrak{g}$ and $h \in G$ we have $F(v)(h)  = (h \cdot F(v))(e) = F(h \cdot v)(e) = \text{ev}(F)(h \cdot v)$, and we can use this formula to define a corresponding $F$ given any polynomial map $f : \mathfrak{g} \to \Lambda^p(\mathfrak{g}^\vee)$. Under this isomorphism the differential $d_G$ is the sum of the Lie algebra cohomology differential on $C^\ast(\mathfrak{g} ; S^\ast(\mathfrak{g}^\vee)) \cong S^*(\mathfrak{g}^\vee) \otimes \Lambda^*(\mathfrak{g}^\vee)$ and the Koszul differential on $S^*(\mathfrak{g}^\vee) \otimes \Lambda^*(\mathfrak{g}^\vee)$ (determined by $1 \otimes v \mapsto v \otimes 1$ and $v \otimes 1 \mapsto 0$ for $v \in \mathfrak{g}^\vee$, and the fact that it is a derivation).\footnote{Indeed, this gives an explanation for the chain complex $(\Omega^\ast_G(G), d_G)$ having trivial cohomology in strictly positive degrees: filtering this chain complex by degree of the $\Lambda^*(\mathfrak{g}^\vee)$ factor reduces the differential to the Koszul differential, which has a $G$-equivariant chain contraction given by $v \otimes 1 \mapsto 1 \otimes v$ and $1 \otimes v \mapsto 0$.} In fact, $\Omega^\ast_G(G)$ is nothing but the Weil algebra, introduced by Cartan (reprinted in \cite{GuilleminSternberg}). This can be used to show that
$$\widehat{H}^{2}_G(G) \cong \mathfrak{g}^\vee, \quad\quad \widehat{H}^{3}_G(G) \cong \Lambda^2(\mathfrak{g}^\vee);$$
beyond this the answer depends on the Lie algebra structure, so is somewhat more complicated. Nevertheless, it is always finite-dimensional. 
 \qed
\end{example}
%%%%%%%%%%%%%%%%%%%%%%%%%%%%%%%%%%%%%%%%
\begin{subexample}[Tori]\label{sec:exu11}
%%%%%%%%%%%%%%%%%%%%%%%%%%%%%%%%%%%%%%%%
By way of example, consider the action of $U(1)$ on itself by left translations. Because $U(1)$ is abelian the Lie algebra cohomology differential on $C^*(\mathfrak{u}(1) ; S^*(\mathfrak{u}(1)^\vee))$ is zero, so $\Omega^*_{U(1)}(U(1))$ is identified with the Koszul complex. Thus $\widehat{H}^{n \geq 2}_{U(1)}(U(1))$ vanishes in odd degrees and is $\R$ in even degrees. 

Thus we obtain
$$\widehat{H}^\ast_{U(1)}(U(1)) \cong \Z \oplus \R/\Z \oplus \R \oplus 0 \oplus \R \oplus \dots $$
and we find that $\widehat{H}^\ast_{U(1)}(\mathrm{pt}) \to \widehat{H}^\ast_{U(1)}(U(1))$ is not injective. Indeed, the theta terms have disappeared compared to the pure gauge theory case. Moreover, the quantization condition on Chern--Simons terms has been removed. These apparently odd results become obvious once one considers the geometric picture. The insistence on having an equivariant map $f:P \to X$ in the data, which corresponds to a section of $P \times_{U(1)} U(1) \cong P$, forces $P$ to be a trivial bundle, so the action corresponding to theta terms becomes trivial. Moreover, we now have a privileged section of $P$, so the requirement that the Chern-Simons action be independent of the section is rendered obsolete. \qed
\end{subexample}
%%%%%%%%%%%%%%%%%%%%%%%%%%%%%%%%%%%%%%%%
%%%%%%%%%%%%%%%%%%%%%%%%%%%%%%%%%%%%%%%%
\begin{subexample}[$G=SO(3)$]\label{sec:exso3}
%%%%%%%%%%%%%%%%%%%%%%%%%%%%%%%%%%%%%%%%
Let us consider the case of degree two which corresponds to the quantum mechanics of a rigid body, from the geometric viewpoint. The $U(1)$-principal bundles on $SO(3) \cong \R P^3$ are classified by $H^2(\R P^3;\Z) \cong \Z/2$. Suitable representatives are the homogeneous spaces $SO(3)\times U(1) \twoheadrightarrow SO(3)$ and $U(2) \twoheadrightarrow SO(3)$. Since $H^2_{SO(3)}(SO(3);\Z) \cong 0$, we see that only the first of these can admit an $SO(3)$-equivariant action. Because this bundle is trivial, the $SO(3)$-invariant connections descend to $SO(3)$-invariant 1-forms on $SO(3)$, in one-to-one correspondence with $\widehat{H}_{SO(3)}^2(SO(3)) \cong \Omega^2_{SO(3)}(SO(3))_\Z \cong \mathfrak{so}(3)^\vee \cong \R^3$.

The fact that the non-trivial bundle $U(2) \twoheadrightarrow SO(3)$ does not have an $SO(3)$-equivariant action leads us to the conclusion that the global rotational symmetry possessed by a rigid body that is a fermion cannot be gauged. This result, which is similar to the result that we obtained by {\em ad hoc} arguments for the Dirac monopole in \S \ref{sec:dirac}, is similarly consistent with an exact quantum mechanical solution (see, {\em e.g.} \cite{Davighi:2019ffp}), which shows that the energy eigenstates have half-integer spin and so carry projective representations of $SO(3)$, which leads to an anomaly on gauging.\qed
\end{subexample}
%%%%%%%%%%%%%%%%%%%%%%%%%%%%%%%%%%%%%%%%
\begin{example}[Transitive actions]\label{sec:TransitiveAct}
%%%%%%%%%%%%%%%%%%%%%%%%%%%%%%%%%%%%%%%%
When $G$ acts transitively on $X$, we have that $X$ is diffeomorphic to $G/H$ for some $H \subset G$. Since $EG$ is a model for $EH$, we immediately obtain $H^\ast_G(G/H,A)\cong H^\ast(BH,A)$. Thus we are reduced to finding a description of $\Omega_G^*(G/H)$. 
Analogously to Example \ref{sec:TranslationAct}, evaluation at the identity gives an isomorphism
$$\text{ev} : \Omega_G^*(G/H) = [S^*(\mathfrak{g}^\vee) \otimes \Omega^*(G/H)]^G \to [S^*(\mathfrak{g}^\vee) \otimes \Lambda^* ((\mathfrak{g}/\mathfrak{h})^\vee)]^H;$$
the differential $d_G$ is the identified with the sum of the relative Lie algebra cohomology differential on $C^*(\mathfrak{g}, H ; S^*(\mathfrak{g}^\vee))\cong [S^*(\mathfrak{g}^\vee) \otimes \Lambda^* ((\mathfrak{g}/\mathfrak{h})^\vee)]^H$ and the analogous Koszul-like differential on $[S^*(\mathfrak{g}^\vee) \otimes \Lambda^* ((\mathfrak{g}/\mathfrak{h})^\vee)]^H$. This is nothing but Cartan's relative Weil algebra (reprinted in \cite{GuilleminSternberg}).
\qed
\end{example}
%%%%%%%%%%%%%%%%%%%%%%%%%%%%%%%%%%%%%%%%
\begin{subexample}[$SO(3)/SO(2)$]\label{sec:exu11}
%%%%%%%%%%%%%%%%%%%%%%%%%%%%%%%%%%%%%%%%
Here too we can see that there will be an obstruction to gauging the $SO(3)$ symmetry in quantum mechanics. Indeed, in degree two we have that $H_{SO(3)}^2(S^2;\Z)=H^2(BSO(2);\Z)\cong \Z$. But nevertheless we can see that the forgetful map to invariant cohomology (which as we will later show is isomorphic to $H^2(S^2;\Z) \cong \Z$) does not surject, but rather corresponds to multiplication by 2. 

Let us first give an algebraic argument using the Serre exact sequence. 
We have a fibration $S^2 \to ESO(3) \times_{SO(3)} S^2 \to BSO(3)$. Since $BSO(3)$ and $S^2$ are both 1-connected, part of the Serre sequence reads
$$ H^2(BSO(3);\Z) \to  H_{SO(3)}^2(S^2;\Z)  \to H^2(S^2;\Z) \to  H^3(BSO(3);\Z) \to  H_{SO(3)}^3(S^2;\Z) $$
or
$$ 0 \to \Z \to  \Z  \to  \Z/2 \to 0,$$ so that the map of interest (which is the one induced by inclusion of the fibre) is indeed multiplication by two.

The algebraic result is hardly surprising from the geometric point of view. Indeed,
the principal $U(1)$-bundles over $S^2$ are isomorphic to Lens spaces and are classified by an integer $m$. The trivial bundle with $m=0$ evidently admits an $SO(3)$-equivariant action, as does the bundle with $m=2$, being isomorphic to $SO(3) \twoheadrightarrow S^2$. But it seems highly improbable that the bundle with $m=1$,  {\em viz.} the Hopf bundle $S^3 \twoheadrightarrow S^2$, admits an equivariant action by $SO(3)$, given that it admits an obvious $SU(2)$-equivariant action in which the center acts non-trivially. A purely geometric proof that no such action exists can be found easily enough, but we spare the reader the details.

The physics of this example is the following. We imagine an electrically-charged particle moving in the background of a magnetic monopole. There is a rotation symmetry, and we learn that it can only be gauged when the magnetic charge is even. As for the example of the rigid body, this is consistent with the fact that the quantum mechanical energy eigenstates carry a projective representation of the rotation group.

To finish the calculation of equivariant differential cohomology in degree two, we need to compute $\Omega^2_{SO(3)}(S^2)_\Z$. The dual of the chain complex $[S^*(\mathfrak{so}(3)^\vee) \otimes \Lambda^* ((\mathfrak{so}(3)/\mathfrak{so}(2))^\vee)]^{SO(2)}$ in degrees one, two, and three has the form
$$[\mathfrak{so}(3)/\mathfrak{so}(2)]_{SO(2)} \overset{d_G^\vee}\leftarrow [\Lambda^2(\mathfrak{so}(3)/\mathfrak{so}(2)) \oplus \mathfrak{so}(3) \otimes \R]_{SO(2)} \overset{d_G^\vee}\leftarrow [\mathfrak{so}(3) \otimes \mathfrak{so}(3)/\mathfrak{so}(2)]_{SO(2)},$$
where $[\cdot]_G$ denotes the $G$-coinvariants. Letting $\mathfrak{so}(3) = \langle X, Y, Z \rangle$ with $[X,Y]=Z$, $[Z,X]=Y$, and $[Y,Z]=X$ and $\mathfrak{so}(2)=\langle X \rangle$ we easily calculate
$$[\mathfrak{so}(3)/\mathfrak{so}(2)]_{SO(2)}=0 \quad\quad [\Lambda^2(\mathfrak{so}(3)/\mathfrak{so}(2))]_{SO(2)} = \langle Y \wedge Z \rangle \quad\quad [\mathfrak{so}(3)]_{SO(2)} = \langle X\rangle$$
and $[\mathfrak{so}(3) \otimes \mathfrak{so}(3)/\mathfrak{so}(2)]_{SO(2)} = \langle Y \otimes Y, Y \otimes Z\rangle$. In these terms the differential is given by $d_G^\vee(Y \otimes Y)=0$ and $d_G^\vee (Y \otimes Z) = Y \wedge Z + X$. Dualising again, we see that the closed forms in $\Omega^2_{SO(3)}(S^2)$ are 1-dimensional, spanned by an equivariant volume form of $S^2$.

But, as we have seen, the integrality condition picks out the forms corresponding to de Rham classes $2\Z \subset \R$. Nevertheless, equivariant differential cohomology in degree two is isomorphic to $\Z$. In the geometric picture, there is a unique $SO(3)$-invariant connection on the $SO(3)$-equivariant bundles with even first Chern class, given by pulling back such a connection on the $SO(3)$-equivariant bundle $SO(3) \to S^2$ (of Chern class 2) along a degree $n$ map $S^2 \to S^2$.\qed
\end{subexample}
%%%%%%%%%%%%%%%%%%%%%%%%%%%%%%%%%%%%%%%%
%%%%%%%%%%%%%%%%%%%%%%%%%%%%%%%%%%%%%%%%
%%%%%%%%%%%%%%%%%%%%%%%%%%%%%%%%%%%%%%%%
%%%%%%%%%%%%%%%%%%%%%%%%%%%%%%%%%%%%%%%%
%%%%%%%%%%%%%%%%%%%%%%%%%%%%%%%%%%%%%%%%
%%%%%%%%%%%%%%%%%%%%%%%%%%%%%%%%%%%%%%%%
\section{Invariant differential cohomology and global symmetry}\label{sec:inv}
%%%%%%%%%%%%%%%%%%%%%%%%%%%%%%%%%%%%%%%%
Now we wish to consider the case of physics actions that are invariant under a global symmetry. Now, we have a target $X$ with a smooth action by Lie group $G$ (no longer necessarily compact). An invariant action can be constructed straightforwardly as follows. The $G$-action on $X$ induces an action on the abelian group $\widehat{H}^\ast(X)$ (as well as on all the other objects appearing in (\ref{eq:character diagram IDC})). We define the invariant differential cohomology of $X$, denoted $\widehat{H}^\ast(X)^G$, to be the subgroup of elements of $\widehat{H}^\ast(X)$ that are fixed by the induced $G$-action. Clearly, taking an element $h^G$ in $\widehat{H}^\ast(X)^G$ and performing the construction described in \S\ref{sec:ord} results in a physics action that is $G$-invariant. 

Let us give a geometric description in low degrees, as we did in the ordinary and equivariant cases. In degree one, invariant differential cohomology is isomorphic to the $G$-invariant maps from $X$ to $U(1)$, so is in fact isomorphic to equivariant differential cohomology. In degree two, it is isomorphic to the isomorphism classes of principal $U(1)$-bundles with connection whose holonomies are $G$-invariant, which differs from what we found in the equivariant case. In \S \ref{sec:gau}, we will see that there is a natural map from equivariant to invariant differential cohomology, which neither injects nor surjects in general in degree two or higher. The failure to surject leads to the possibility of topological physics actions with global symmetries which cannot be gauged.

Whilst invariant differential cohomology is straightforward to define it is less easy to give an algebraic characterization. A first observation is that, while taking $G$-invariants is functorial, the functor is only left exact, in general. Thus, whilst it is the case that we do have a commutative diagram
\begin{equation} \label{eq:character diagram IDC}
\begin{tikzcd}[row sep=scriptsize,column sep=tiny]
{} & {} & H^{n-1}(X;\R/\Z)^G \arrow[rr,"b^G" description] \arrow[dr, hookrightarrow, "j^G" description] & {} & H^{n}(X;\Z)^G \arrow[dr] & {} & {} \\ 
{} & H^{n-1}(X;\R)^G \arrow[ur] \arrow[dr] & {} &  \widehat{H}^{n}(X)^G \arrow[ur, "\text{char}^G" description] \arrow[dr, "\text{curv}^G" description] & {} & H^{n}(X;\R)^G & {} \\
{} & {} & \left[\Omega^{n-1}(X)/\Omega^{n-1}(X)_{\Z} \right]^G \arrow[rr,"d^G" description] \arrow[ur, hookrightarrow, "i^G" description] & {} & \Omega^{n}(X)_{\Z}^G \arrow[ur] & {} & {}
\end{tikzcd}
\end{equation}
(where the superscript $^G$ on a map denotes the restriction to the invariant subgroups) in which the 2 diagonals in the centre are exact at $\widehat{H}^\ast(X)^G$, it is no longer always the case that the outer parts of the diagram make up long exact sequences and nor is it the case that the maps $\text{curv}^G$ and $\text{char}^G$ necessarily surject. 
\begin{example}(Circle action by translations).
For a counterexample, it suffices to consider the action of the group $U(1)$ on itself by left translations. In degree one, elements of invariant differential cohomology correspond to $U(1)$-equivariant maps $U(1) \to U(1)$ where the action is trivial on the target and by translation in the source: in other words, constant maps. Being nullhomotopic, such maps do not surject on to $H^1(U(1),\Z)^{U(1)} \cong \pi_1(U(1))^{U(1)} \cong \Z$ which includes classes of maps with non-vanishing winding, which, though not invariant themselves, nevertheless are homotopic to their translates. Similarly, the map $\text{curv}^G$ corresponds to the derivative, and does not surject, since the derivative of a constant map vanishes, whilst there are non-vanishing invariant integral 1-forms on the circle, namely those forms that are integer multiples of the unit volume form. \qed
\end{example}

To get a better handle on the maps $\text{curv}^G$ and $\text{char}^G$, it is natural to consider the derived functors of the invariants functor $\cdot^G$, which enable us to extend a left-exact sequence to a long exact sequence. Before doing that, it is desirable to endow differential cohomology with extra structure, namely a topology. Doing so is not only motivated on physical grounds (after all, we expect that physics actions which are close enough to each other should be indistinguishable in experiments), but also allows us to give a more concrete characterization of invariant differential cohomology. 
%%%%%%%%%%%%%%%%%%%%%%%%%%%%%%%%%%%%%%%%
%%%%%%%%%%%%%%%%%%%%%%%%%%%%%%%%%%%%%%%%
%%%%%%%%%%%%%%%%%%%%%%%%%%%%%%%%%%%%%%%%
%%%%%%%%%%%%%%%%%%%%%%%%%%%%%%%%%%%%%%%%
%%%%%%%%%%%%%%%%%%%%%%%%%%%%%%%%%%%%%%%%
%%%%%%%%%%%%%%%%%%%%%%%%%%%%%%%%%%%%%%%%

%%%%%%%%%%%%%%%%%%%%%%%%%%%%%%%%%%%%%%%%
%%%%%%%%%%%%%%%%%%%%%%%%%%%%%%%%%%%%%%%%
%%%%%%%%%%%%%%%%%%%%%%%%%%%%%%%%%%%%%%%%
%%%%%%%%%%%%%%%%%%%%%%%%%%%%%%%%%%%%%%%%
%%%%%%%%%%%%%%%%%%%%%%%%%%%%%%%%%%%%%%%%
\section{A smooth structure on equivariant differential cohomology}\label{sec:top}
%%%%%%%%%%%%%%%%%%%%%%%%%%%%%%%%%%%%%%%%
Becker, Schenkel, and Szabo \cite[Appendix A]{Becker:2014tla} have explained how, for manifolds $X$ having finite type\footnote{That is, which admit a finite good cover. In fact it suffices for the integral homology groups of $X$ to be finitely-generated, and this is what we shall assume.}, the terms in the diagram \eqref{eq:character diagram DC} may be given the structure of abelian Fr{\'e}chet--Lie groups such that all the homomorphisms involved are smooth. Here we outline how their construction extends to equivariant differential cohomology i.e.\ the diagram \eqref{eq:character diagram EDC}, and also explain how this makes the rows and diagonals of \eqref{eq:character diagram EDC} smoothly exact and the diagonals smoothly split.

We adopt the notation $H^n_G(X;\R)_\Z := \im(H^n_G(X;\Z) \to H^n_G(X;\R))$, and write $\Omega_G^{n}(X)_{\mathrm{cl}}$ for the subspace of $\Omega_G^{n}(X)$ consisting of equivariantly-closed forms.

%In this section we will explain how all the terms in the diagram \eqref{eq:character diagram EDC} may be equipped with the structure of topological abelian groups, so that the maps are continuous homomorphisms. By taking $G$ to be the trivial group we obtain the same structure on \eqref{eq:character diagram DC}. When $G$ is trivial we will in fact get a slightly stronger conclusion: the rows and diagonals of \eqref{eq:character diagram DC} are topologically exact and the diagonals are topologically split.

\vspace{1ex}

\noindent\textbf{The Bockstein sequence}. We begin with the top row of \eqref{eq:character diagram EDC}, given by the Bockstein sequence
$$\cdots \to H_G^{n-1}(X;\R) \to H_G^{n-1}(X;\R/\Z) \overset{b_G}\to H_G^{n}(X;\Z) \to H_G^{n}(X;\R) \to \cdots$$
in equivariant cohomology. We give $H_G^{n}(X;\Z)$ the discrete topology, with which it is trivially an abelian Fr{\'e}chet--Lie group. As $X$ has finite type the cohomology groups $H^n_G(X ; \R)$ are finite dimensional real vector spaces, so have a unique Lie group structure. The Bockstein sequence provides a short exact sequence
$$0 \to H_G^{n}(X;\R)/H_G^{n}(X;\R)_\Z \to H_G^{n}(X;\R/\Z) \to \mathrm{tors~} H_G^{n+1}(X;\Z) \to 0.$$
We give $H_G^{n}(X;\R)/H_G^{n}(X;\R)_\Z$ its standard Lie group structure, and as $\mathrm{tors}(H_G^{n+1}(X;\Z))$ is discrete the group $H_G^{n}(X;\R/\Z)$ then has a unique Lie group structure as a disjoint union of cosets of the torus $H_G^{n}(X;\R)/H_G^{n}(X;\R)_\Z$. 

With these choices the Bockstein sequence consists of abelian Fr{\'e}chet--Lie group and smooth homomorphisms.

\vspace{1ex}

\noindent\textbf{The de~Rham sequence}. We now consider the bottom row of \eqref{eq:character diagram EDC}, given by the de~Rham sequence
$$\cdots \to H_G^{n-1}(X;\R) \to \Omega_G^{n-1}(X)/\Omega^{n-1}_G(X)_{\Z} \xrightarrow{d_G}  \Omega_G^{n}(X)_{\Z} \to H_G^{n}(X;\R) \to \cdots.$$
Recall that $\Omega_G^{*}(X) := [S^\ast\mathfrak{g}^\vee \otimes \Omega^*(X)]^G$ is the Cartan model for $G$-equivariant de~Rham forms, with differential $d_G$, and $\Omega_G^{*}(X)_\Z$ denotes the $d_G$-closed forms which, under the equivariant de~Rham isomorphism $H^*(\Omega_G^{*}(X), d_G) \cong H_G^*(X;\R)$, represent classes in $H_G^{*}(X;\R)_\Z$. 

We equip $\Omega^n(X)$ with the weak Whitney $C^\infty$-topology---with which it is a Fr{\'e}chet space---give $\mathfrak{g}$ its usual topology, and take the induced topology on $S^*\mathfrak{g}^\vee \otimes \Omega^*(X)$, with which it is also a (graded) Fr{\'e}chet space. As such it is Hausdorff and so the $G$-fixed points $[S^*\mathfrak{g}^\vee \otimes \Omega^*(X)]^G = \Omega_G^*(X)$ form a closed subspace, and so are also a (graded) Fr{\'e}chet space. The differential $d_G$ is bounded. By the equivariant de Rham theorem there is a short exact sequence
$$0 \to d \Omega_G^{n-1}(X) \to \Omega_G^{n}(X)_{\mathrm{cl}} \to H^n_G(X ; \R) \to 0.$$
As $X$ has finite type $H^n_G(X ; \R)$ is a finite-dimensional vector space and so this sequence has a continuous splitting: it follows that $d \Omega_G^{n-1}(X)$ is a closed subspace of $\Omega_G^{n}(X)_{\mathrm{cl}}$, and as $d_G$ is bounded $\Omega_G^{n}(X)_{\mathrm{cl}}$ is a closed subspace of $\Omega_G^{n}(X)$. Thus the exact forms $d \Omega_G^{n-1}(X)$ are a closed subspace of all forms, and so are again a Fr{\'e}chet space. The short exact sequence
$$0 \to d \Omega_G^{n-1}(X) \to \Omega_G^{n}(X)_\Z \to H^n_G(X;\R)_\Z \to 0$$
and the fact that $H^n_G(X;\R)_\Z$ is discrete thus endows $\Omega_G^{n}(X)_\Z$ with the structure of an abelian Fr{\'e}chet--Lie group. Similarly, considering the short exact sequence
$$0 \to H^{n-1}_G(X;\Z)\to \Omega_G^{n-1}(X)/d\Omega^{n-2}_G(X) \to \Omega_G^{n-1}(X)/\Omega^{n-1}_G(X)_{\Z} \to 0$$
and using that $d\Omega^{n-2}_G(X)$ is a closed subspace of $\Omega_G^{n-1}(X)$ so that $\Omega_G^{n-1}(X)/d\Omega^{n-2}_G(X)$ is a Fr{\'e}chet space, we obtain an abelian Fr{\'e}chet--Lie group on $\Omega_G^{n-1}(X)/\Omega^{n-1}_G(X)_{\Z}$.

With these choices  the de~Rham sequence consists of abelian Fr{\'e}chet--Lie groups and smooth homomorphisms.%, but we wish to show that it is in addition topologically exact.

\vspace{1ex}

\noindent\textbf{Equivariant differential cohomology}.  Consider the diagonal short exact sequence
\begin{equation}\label{eq:char}
0 \to \Omega_G^{n-1}(X)/\Omega_G^{n-1}(X)_\Z \overset{i_G}\to \widehat{H}^n_G(X) \overset{\text{char}_G}\to H_G^n(X; \Z) \to 0
\end{equation}
from \eqref{eq:character diagram EDC}. As we have given $H_G^n(X; \Z)$ the discrete topology, this expresses $\widehat{H}^n_G(X)$ as a disjoint union of cosets of the abelian Fr{\'e}chet--Lie group $\Omega_G^{n-1}(X)/\Omega_G^{n-1}(X)_\Z$ and we therefore give each coset a Fr{\'e}chet manifold structure using an identification with $\Omega_G^{n-1}(X)/\Omega_G^{n-1}(X)_\Z$. This defines an abelian Fr{\'e}chet--Lie group structure on $\widehat{H}^n_G(X)$, making the homomorphisms in this short exact sequence smooth. 

It remains to see that the other short exact sequence
\begin{equation}\label{eq:curv}
0 \to H_G^{n-1}(X;\R/\Z) \overset{j_G}\to \widehat{H}^n_G(X) \overset{\text{curv}_G}\to \Omega_G^n(X)_\Z \to 0
\end{equation}
now consists of smooth homomorphisms. It suffices to check this when restricted to the path components of the identity. 
For $\text{curv}_G$ it follows from the fact that $d_G = \text{curv}_G \circ i_G$ is smooth. For $j_G$ we may use that the identity component of $H_G^{n-1}(X; \R/\Z)$ is a quotient space of $H_G^{n-1}(X; \R)$, and that the homomorphism $H_G^{n-1}(X; \R) \to \Omega_G^{n-1}(X)/\Omega_G^{n-1}(X)_\Z$ in the de Rham sequence is smooth.

\vspace{1ex}

\noindent\textbf{Smooth exactness and splitness}. Above we have shown that there are various sequences of abelian Fr{\'e}chet--Lie groups and smooth homomorphisms which are exact in the algebraic sense, i.e.\ after neglecting the Fr{\'e}chet manifold structure. But a stronger notion of exactness is available for abelian Fr{\'e}chet--Lie groups:

\begin{defn}
Say that a short exact sequence $0 \to A \to B \to C \to 0$ of abelian Fr{\'e}chet--Lie groups and smooth homomorphisms is \emph{smoothly exact} if 
\begin{enumerate}%[(i)]
\item[a.] $A \to B$ is a diffeomorphism onto a submanifold, and
\item[b.] $B \to C$ admits a smooth section on a neighbourhood of the identity.
\end{enumerate}
Alternatively, condition b.\ is equivalent to
\begin{enumerate}
\setcounter{enumi}{1}
\item[b$^\prime$.] $B \to C$ is a smooth principal $A$-bundle.
\end{enumerate}
Say that it is \emph{smoothly split} if there is in addition a smooth homomorphism $C \to B$ right inverse to $B \to C$; equivalently, a smooth homomorphism $B \to A$ left inverse to $A \to B$.

Say that a long exact sequence $\cdots \to A_i \overset{d_i}\to A_{i+1} \overset{d_{i+1}}\to A_{i+2} \to \cdots$ is smoothly exact if each of the short exact sequences $0 \to \ker(d_i) \to A_i \overset{d_i}\to \im(d_i) \to 0$ is (condition a.\ is automatic in this case, as $\ker(d_i)$ is a submanifold of $A_i$ by definition).
\end{defn}

\begin{lemma}\label{lem:TopExact}
With the abelian Fr{\'e}chet--Lie group structures we have described, in the diagram \eqref{eq:character diagram EDC} the rows and diagonals are smoothly exact. Moreover, the diagonals are smoothly split.
\end{lemma}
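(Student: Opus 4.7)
The plan is to verify the claims short exact sequence by short exact sequence, handling the rows and the two diagonals in turn. For the two rows the analysis is relatively light: in the Bockstein row every term is either a finite-dimensional abelian Lie group or a discrete group, so smooth exactness reduces to classical Lie theory; in the de~Rham row, the key inputs have already been furnished above, namely that $d_G\Omega_G^{n-1}(X)$ is closed in the Fréchet space $\Omega_G^{n}(X)_{\mathrm{cl}}$ and that short exact sequences of Fréchet spaces with finite-dimensional cokernel (such as those coming from $H^n_G(X;\R)$ being finite-dimensional) admit continuous linear sections by Hahn--Banach.

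For the char diagonal $0 \to \Omega_G^{n-1}(X)/\Omega_G^{n-1}(X)_\Z \xrightarrow{i_G} \widehat{H}^n_G(X) \xrightarrow{\mathrm{char}_G} H_G^n(X;\Z) \to 0$, smooth exactness is immediate because $H_G^n(X;\Z)$ is discrete, which exhibits $\widehat{H}^n_G(X)$ as a disjoint union of cosets of its identity component. For smooth splitness I will observe that the kernel $\Omega_G^{n-1}(X)/\Omega_G^{n-1}(X)_\Z$ is a \emph{divisible} abelian group: it is an extension (via $d_G$) of the real vector space $d_G\Omega_G^{n-1}(X)$ by the torus $H^{n-1}_G(X;\R)/H^{n-1}_G(X;\R)_\Z$, both of which are divisible, and divisibility is preserved by such extensions. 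Since divisible abelian groups are injective in the category of abelian groups, the char sequence splits algebraically; and any algebraic section from the discrete group $H^n_G(X;\Z)$ is automatically smooth.

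The curv diagonal $0 \to H_G^{n-1}(X;\R/\Z) \xrightarrow{j_G} \widehat{H}^n_G(X) \xrightarrow{\mathrm{curv}_G} \Omega_G^n(X)_\Z \to 0$ is the main obstacle; smooth exactness will follow once a smooth homomorphic section is constructed. The plan is to first split $\Omega_G^n(X)_\Z$ as a direct sum of its identity component $d_G\Omega_G^{n-1}(X)$ and its component group $H^n_G(X;\R)_\Z \cong \Z^k$ (possible because $\Z^k$ is free abelian), and to build the section factor-by-factor: the $\Z^k$ factor is handled by lifting a basis arbitrarily to $\widehat{H}^n_G(X)$, smoothly because the source is discrete. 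It remains to construct a smooth homomorphism $\sigma : d_G\Omega_G^{n-1}(X) \to \Omega_G^{n-1}(X)/\Omega_G^{n-1}(X)_\Z$ right inverse to $d_G$, as we may then post-compose with $i_G$.

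For $\sigma$, I will pass to the universal cover of $\Omega_G^{n-1}(X)/\Omega_G^{n-1}(X)_\Z$, which is the Fréchet space $\Omega_G^{n-1}(X)/d_G\Omega_G^{n-2}(X)$ with deck group the discrete lattice $H^{n-1}_G(X;\R)_\Z$. It fits into a short exact sequence of Fréchet spaces
\[
0 \to H^{n-1}_G(X;\R) \to \Omega_G^{n-1}(X)/d_G\Omega_G^{n-2}(X) \xrightarrow{d_G} d_G\Omega_G^{n-1}(X) \to 0 ,
\]
and because $H^{n-1}_G(X;\R)$ is finite-dimensional, Hahn--Banach guarantees it admits a closed topological complement, yielding a continuous linear section of $d_G$ which descends through the covering projection to the sought-after $\sigma$.
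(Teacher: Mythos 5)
Your proposal is correct and, for the two rows, the char-diagonal exactness, and the construction of the section $\sigma$ of $d_G$ (Hahn--Banach applied to the finite-dimensional subspace $H^{n-1}_G(X;\R)$ of $\Omega^{n-1}_G(X)/d_G\Omega^{n-2}_G(X)$, then descending along the covering), it follows the paper's argument essentially verbatim. Where you genuinely diverge is in splitting the char diagonal: the paper decomposes the discrete quotient $H^n_G(X;\Z)$ into its free part (split because free abelian groups are projective) and its torsion part (split by pushing a splitting of the Bockstein sequence $0 \to H^{n-1}_G(X;\R)/H^{n-1}_G(X;\R)_\Z \to H^{n-1}_G(X;\R/\Z) \to \mathrm{tors}\,H^n_G(X;\Z) \to 0$ --- which exists because the torus is divisible, hence injective --- through the map $j_G$). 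You instead observe that the kernel $\Omega^{n-1}_G(X)/\Omega^{n-1}_G(X)_\Z$ is itself divisible (indeed it is a quotient of the real vector space $\Omega^{n-1}_G(X)$, so even your two-step extension argument is more than is needed), hence injective as an abelian group, so the sequence splits algebraically and the section is automatically smooth since its source is discrete. Your route is shorter and gets the splitting in one stroke; the paper's route produces an explicit splitting expressed through the other maps of the hexagon, in the same style it then reuses for the curv diagonal. One small omission: for the curv diagonal your global homomorphic section delivers condition (b) of smooth exactness but not condition (a), namely that $j_G$ is a diffeomorphism onto a submanifold. This does not follow formally from the existence of the splitting (a smooth bijective homomorphism of Fr\'echet--Lie groups need not be a diffeomorphism), but it is easily supplied as in the paper: the identity component of $H^{n-1}_G(X;\R/\Z)$ is the quotient of $H^{n-1}_G(X;\R)$ by $H^{n-1}_G(X;\R)_\Z$, and $j_G$ identifies it with the corresponding sub-torus of $\Omega^{n-1}_G(X)/\Omega^{n-1}_G(X)_\Z \subset \widehat{H}^n_G(X)$, compatibly with the Lie group structures on both sides.
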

\begin{proof}
It is easy to see that the top row of \eqref{eq:character diagram EDC} is smoothly exact (the fact that $H^{n}_G(X;\Z)$ is discrete makes this especially easy). 

For the bottom row we use the argument of \cite[Appendix A.1]{Becker:2014tla}, adapted to the equivariant case. For smooth exactness of
$$0 \to H_G^{n-1}(X;\R)/H_G^{n-1}(X;\R)_\Z \to \Omega_G^{n-1}(X)/\Omega^{n-1}_G(X)_{\Z} \xrightarrow{d_G}  d\Omega_G^{n}(X) \to 0$$
it suffices to show that the short exact sequence of Fr{\'e}chet spaces
$$0 \to H_G^{n-1}(X;\R) \to \Omega_G^{n-1}(X)/\Omega^{n-1}_G(X)_{\mathrm{cl}} \xrightarrow{d_G}  d\Omega_G^{n}(X) \to 0$$
has a continuous linear splitting. As $H_G^{n-1}(X;\R)$ is finite-dimensional by our assumption that $X$ has finite type, this has a continuous linear splitting by an application of the Hahn--Banach theorem for locally convex topological vector spaces. For smooth exactness at $\Omega^n_G(X)_\Z$ we use that its image in $H^n_G(X;\R)$ is the lattice $H^n_G(X;\R)_\Z$ and so is discrete, hence there is nothing to check. For smooth exactness at $H^n_G(X;\R)$ we use that this is a finite-dimensional vector space whose image in $\Omega^n_G(X)/\Omega^n_G(X)_\Z$ is $H^n_G(X;\R)/H^n_G(X;\R)_\Z$, and $H^n_G(X;\R) \to H^n_G(X;\R)/H^n_G(X;\R)_\Z$ certainly has a smooth inverse on a neighbourhood of the identity.

Our definition of the abelian Fr{\'e}chet--Lie group structure on $\widehat{H}^n_G(X)$ makes 
\begin{equation}\label{eq:ordchar}
0 \to \Omega^{n-1}_G(X)/\Omega^{n-1}_G(X)_\Z \overset{i}\to \widehat{H}^n_G(X) \overset{\text{char}_G}\to H^n_G(X; \Z) \to 0
\end{equation}
smoothly exact by definition. For 
\begin{equation}\label{eq:ordcurv}
0 \to H^{n-1}_G(X;\R/\Z) \overset{j}\to \widehat{H}^n_G(X) \overset{\text{curv}_G}\to \Omega^n_G(X)_\Z \to 0,
\end{equation}
we observe that the homomorphisms 
$$H^{n-1}_G(X;\R) \to H^{n-1}_G(X;\R/\Z) \quad \text{ and } \quad H^{n-1}_G(X;\R) \to \Omega^{n-1}_G(X)/\Omega^{n-1}_G(X)_\Z$$
have the same kernel, $H^n_G(X;\R)_\Z$, so the identity component of $H^{n-1}_G(X;\R/\Z)$ (which we denote with a subscript $0$) may be identified with a subspace of $\Omega^{n-1}_G(X)/\Omega^{n-1}_G(X)_\Z$ and hence of $\widehat{H}^n_G(X)$, verifying condition {\em a}. For condition {\em b}, note that the identity component of $\Omega^n_G(X)_\Z$ is the space $d\Omega^{n-1}_G(X)$ of exact forms, and use the Hahn--Banach argument above to say that the composition
$$\Omega^{n-1}_G(X)/\Omega^{n-1}_G(X)_{\mathrm{cl}} \to \Omega^{n-1}_G(X)/\Omega^{n-1}_G(X)_\Z = \widehat{H}^n_G(X)_0 \overset{d_G}\to d\Omega^{n-1}_G(X)$$
has a continuous linear, and hence smooth, right inverse, so the right-hand homomorphism does too.

To see that \eqref{eq:ordchar} is smoothly split, observe that as $H^n_G(X;\Z)$ is discrete it suffices to show that it splits as discrete groups. Firstly, the Bockstein sequence provides exact sequences
$$0 \to \mathrm{tors~} H^n_G(X; \Z) \to H^n_G(X; \Z) \to H^n_G(X ; \R)_\Z \to 0,$$
$$0 \to H^{n-1}_G(X;\R)/H^{n-1}_G(X;\R)_\Z \to H^{n-1}_G(X;\R/\Z) \to \mathrm{tors~} H^n_G(X; \Z) \to 0.$$
As $H^n_G(X ; \R)_\Z$ is free abelian, the first sequence is split and we may chose a splitting of \eqref{eq:ordchar} over the corresponding free abelian group, so it remains to show that \eqref{eq:ordchar} may be split over $\mathrm{tors}(H^n_G(X; \Z))$. For this we use that the second sequence is split because the torus $H^{n-1}_G(X;\R)/H^{n-1}_G(X;\R)_\Z$ is a divisible abelian group and so injective. Combining a splitting of the second sequence with the map $j : H^{n-1}_G(X;\R/\Z) \to \widehat{H}^n_G(X)$ gives the required splitting of \eqref{eq:ordchar} over $\mathrm{tors}(H^n_G(X; \Z))$.

To see that \eqref{eq:ordcurv} is smoothly split, observe that the de Rham sequence gives an exact sequence
$$0 \to d\Omega^{n-1}_G(X) \to \Omega^n_G(X)_\Z \to H^n_G(X ; \R)_\Z \to 0,$$
and, as above, because $H^n_G(X ; \R)_\Z$ is free abelian this is (smoothly) split and we may furthermore choose a splitting of \eqref{eq:ordcurv} over the corresponding free abelian group; it remains to show that \eqref{eq:ordcurv} is smoothly split over $d\Omega^{n-1}_G(X)$. But as we have explained above the homomorphism $d_G : \Omega^{n-1}_G(X)/\Omega^{n-1}_G(X)_\Z \to d\Omega^{n-1}_G(X)$  has a smooth right inverse, and composing this with $i : \Omega^{n-1}_G(X)/\Omega^{n-1}_G(X)_\Z \to \widehat{H}^n_G(X)$ gives the required smooth splitting of \eqref{eq:ordcurv} over $d\Omega^{n-1}_G(X)$.
\end{proof}

%%%%%%%%%%%%%%%%%%%%%%%%%%%%%%%%%%%%%%%%
\section{Characterizing invariant differential cohomology}\label{sec:char}
%%%%%%%%%%%%%%%%%%%%%%%%%%%%%%%%%%%%%%%%
The operation of forming $G$-invariants is only left-exact, so applied to the curvature sequence in \eqref{eq:character diagram DC} gives an exact sequence
\begin{align*}
0 \to H^{n-1}(X; \R/\Z)^G  \to &\widehat{H}^n(X)^G \overset{\text{curv}^G}\to \Omega^n(X)_\Z^G
\end{align*}
which need not be surjective on the right. Group cohomology gives a way of extending this to long exact sequences, in particular providing a connecting homomorphism
\begin{align*}
\partial : \Omega^n(X)_\Z^G &\to H^1(G ; H^{n-1}(X; \R/\Z))
\end{align*}
so that the image of $\widehat{H}^n(X)^G$ in $\Omega^n(X)_\Z^G$ is given by the kernel of this homomorphism.

As $G$ is a Lie group which acts smoothly on the terms in \eqref{eq:character diagram DC}, and the diagonals in that diagram are short \emph{smoothly} exact sequences, we may replace the targets of the maps $\partial$ with the corresponding smooth cohomology groups, which we denote by $H^1_{sm}(G;M)$ for a smooth $G$-module $M$. (Specifically, we can take the cohomology of the complex of locally smooth cochains from \cite{WagemannWockel}, denoted $H^*_{loc, s}(G;M)$ there.) This is given by the smooth crossed homomorphisms $ \phi : G \to M$ modulo principal ones. 
For a smoothly exact sequence $0 \to M \to M' \to M'' \to 0$ of $G$-modules and $G$-equivariant maps the connecting map $\partial : [M'']^G \to H^1_{sm}(G;M)$ is given as follows. Choose a section $s : M'' \to M'$ which is smooth near the identity: this is possible as the sequence was smoothly exact. Then, given $m'' \in [M'']^G$ let $\partial(m'') : G \to M$ be given by $g \mapsto g \cdot s(m'') - s(m'') \in M$. This is smooth on a neighbourhood of the identity element of $G$ but is also a crossed homomorphism, so is smooth everywhere.

\vspace{1ex}

\noindent\textbf{Actions reachable by flows}. We apply the previous discussion in the case where the action on $X$ of each $g \in G$ (or more generally a generating set) is reachable by the flow of a vector field on $X$. Then $G$ acts trivially on $H^*(X;A)$, so we have
$$H^1_{sm}(G ; H^{n-1}(X; \R/\Z)) = \mathrm{Hom}_{sm}(G, H^{n-1}(X; \R/\Z)) = \mathrm{Hom}_{sm}(G/[G,G], H^{n-1}(X; \R/\Z)),$$
where we have used the fact that any smooth homomorphism to an abelian Lie group factors uniquely through the abelianization $G/[G,G]$ (which is an abelian Lie group with the quotient topology). %In fact, since $H^{n-1}(X; \R/\Z)$ and $G$ are Lie groups, and since any continuous homomorphism between Lie groups is smooth, we may replace continuous homomorphisms by smooth ones in the expression above.

The connecting homomorphism $\partial$ is given as follows. Let $\omega \in \Omega^n(X)_\Z^G$ be a $G$-invariant integral form and $g \in G$. Choose a section $s :  \Omega^n(X)_\Z \to \widehat{H}^n(X)$ smooth near the identity, and let $\hat{\omega} := s(\omega) \in \widehat{H}^n(X)$; then by definition we have
$$\partial (\omega)(g) = g \cdot \hat{\omega} - \hat{\omega} \in \iota(H^{n-1}(X; \R/\Z)) \subset \widehat{H}^n(X).$$
Let $v$ be a vector field on $X$ such that $\exp(v)$ coincides with the action of $g$. As $\exp(v) \cdot -$ is homotopic to the identity via $F(t, x) = \exp(t v) \cdot x : [0,1] \times X \to X$, we may express this using the homotopy formula in differential cohomology as
$$\partial (\omega)(\exp(v)) = \left[\int_{[0,1]} F^*( \text{curv}(\hat{\omega})) \right]= \left[\int_{[0,1]} F^*(\omega)\right] \in H^{n-1}(X;\R/\Z).$$
Now $F^*\omega = \pi_2^*(F(t,-)^* \omega) + dt \wedge \pi_2^*(\iota_v(\omega))$ by a direct calculation, which is $\pi_2^*\omega + dt \wedge \pi_2^*(\iota_v(\omega))$ as $\omega$ is $G$-invariant, and so $\int_{[0,1]} F^*(\omega) = \iota_v(\omega)$.
Hence we find that $\omega$ is in the kernel of $\partial$ only if $\iota_v(\omega)$ is an integral form. (As $\omega$ is closed and $G$-invariant, $\iota_v(\omega)$ is already closed by Cartan's formula.) One easily shows that if this holds for one $v$ such that $g$ acts as $\exp(v)$ then it holds for any other, so the kernel of $\partial$ is characterized by those forms $\omega$ such that for each $g \in G$ (or a generating set thereof) there exists $v$ such that $g$ acts as $\exp(v)$ such that $\iota_v(\omega)$ is an integral form.

\vspace{1ex}

\noindent\textbf{Connected groups}. When $G$ is connected, an even stronger result holds.

The identity component of $H^{n-1}(X; \R/\Z)$ is a torus with Lie algebra $H^{n-1}(X; \R)$ so, using that $G$ is connected, taking derivatives at the identity identifies the above group $\mathrm{Hom}_{sm}(G/[G,G], H^{n-1}(X; \R/\Z))$ with a subgroup of  $\mathrm{Hom}_\R(\mathfrak{g}/[\mathfrak{g}, \mathfrak{g}], H^{n-1}(X; \R))$. We therefore have an exact sequence
$$0 \to H^{n-1}(X; \R/\Z)  \to \widehat{H}^n(X)^G \overset{\text{curv}^G}\to \Omega^n(X)_\Z^G \overset{\partial'}\to \mathrm{Hom}_\R(\mathfrak{g}/[\mathfrak{g}, \mathfrak{g}], H^{n-1}(X; \R))$$
and we wish to describe $\partial'$.

As before, we have that
$$\partial (\omega)(g) = g \cdot \hat{\omega} - \hat{\omega} \in \iota(H^{n-1}(X; \R/\Z)) \subset \widehat{H}^n(X).$$
Applying this to $g = \exp(v)$ for $v \in \mathfrak{g}$, using the homotopy formula as above, and taking derivatives we find that
$$\partial' (\omega)([v]) = [\iota_v(\omega)] \in H^{n-1}(X;\R)$$
(where $v$ on the right denotes the fundamental vector on $X$ corresponding to $v \in \mathfrak{g}$). In particular $\ker(\partial)$ consists of those $G$-invariant integral forms $\omega$ whose contraction $\iota_v(\omega)$ is exact for every $v \in \mathfrak{g}$. 

This is precisely the so-called Manton condition derived in \cite{Davighi2018}. It shows that a consistent definition of the topological action requires that the curvature form be not just invariant (which for connected $G$ equates to vanishing of the Lie derivative $L_v = \iota_v d + d \iota_v$ and thus implies that $\iota_v \omega$ be closed, since $\omega$ is closed), but rather the stronger condition that $\iota_v \omega$ be exact.

We now make two remarks regarding this Manton condition. The first remark is that it invalidates the classification of invariant WZNW actions given in \cite{DHoker:1994rdl}, because the {\em ad hoc} construction of the action given there involves choices that are manifestly not invariant (the example of quantum mechanics on the torus described below provides a simple counterexample). The second remark is that the condition has an intriguing relation to equivariant differential cohomology which, as we have seen, describes the actions with local symmetry. Indeed, when $G$ is connected, the condition that $\iota_v(\omega)$ be exact is a necessary but not sufficient condition for the closed form $\omega$ to have an equivariantly-closed extension, which is itself a necessary but not sufficient condition for $\omega$ to be the curvature of an element in equivariant differential cohomology. Thus, insisting that the exponentiated action be globally-invariant\footnote{At the purely classical level, the Manton condition is not required for covariance of the Euler-Lagrange equations of motion, though it is required for conservation of the Noether current \cite{Davighi2018}.} already guarantees that one of the conditions required to promote the symmetry to a local one is satisfied. 

\vspace{1ex}

\noindent\textbf{Transitive actions}. Now let us further suppose that the connected group $G$ acts transitively on $X$, so that $X=G/H$ for some closed subgroup $H$. As usual there is an identification $\Omega^*(G/H)^G = C^*(\mathfrak{g}, H;\R)$ with the relative Lie algebra cochains, and so an identification $\Omega^n(G/H)_\Z^G = Z^n(\mathfrak{g}, H;\R)_\Z$ with the integral Lie algebra cocycles. This gives a map $H^*(\mathfrak{g}, H;\R) \to H^*(G/H;\R)$ (which is well-known to be an isomorphism if $G$ is compact).

In this case the discussion of the last section identifies 
$$\partial' : \Omega^n(G/H)_\Z^G \to \mathrm{Hom}_\R(\mathfrak{g}/[\mathfrak{g}, \mathfrak{g}], H^{n-1}(G/H; \R))$$
with the map 
$$\psi \mapsto ([v] \mapsto [\psi(v \wedge -)]) : Z^n(\mathfrak{g}, H;\R)_\Z \to \mathrm{Hom}_\R(\mathfrak{g}/[\mathfrak{g}, \mathfrak{g}], H^{n-1}(\mathfrak{g}, H, \R))$$ 
followed by $H^{n-1}(\mathfrak{g}, H, \R) \to H^{n-1}(G/H; \R)$. 

\vspace{1ex}

\noindent\textbf{Splitting invariant differential cohomology}. In the situation above, of a transitive $G$-action, we have a short exact sequence
\begin{equation}\label{eq:GInvSES}
0 \to H^{n-1}(G/H; \R/\Z)  \to \widehat{H}^n(G/H)^G \overset{\text{curv}^G}\to \ker(\partial) \to 0,
\end{equation}
with $\ker(\partial) \subset Z^n(\mathfrak{g}, H;\R)_\Z$.

\begin{lemma}
The topology on $\widehat{H}^n(G/H)^G$ induced from $\widehat{H}^n(G/H)$ makes it into an abelian Lie group, and with this structure the sequence \eqref{eq:GInvSES} splits as abelian Lie groups.
\end{lemma}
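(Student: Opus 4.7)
The plan is to exhibit a smooth splitting of \eqref{eq:GInvSES} directly from the smooth splitting of the ordinary curvature sequence, and to read off the Lie group structure from it at the same time.

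Take $\sigma : \Omega^n(G/H)_\Z \to \widehat{H}^n(G/H)$ to be a smooth homomorphism with $\text{curv} \circ \sigma = \mathrm{id}$, whose existence is guaranteed by Lemma \ref{lem:TopExact} applied to ordinary differential cohomology (the case of the trivial group in \eqref{eq:ordcurv}). The key claim is that $\sigma$ carries $\ker(\partial)$ into $\widehat{H}^n(G/H)^G$. Indeed, for $\omega \in \Omega^n(G/H)_\Z^G$ the function $g \mapsto g \cdot \sigma(\omega) - \sigma(\omega)$ takes values in $j(H^{n-1}(G/H;\R/\Z))$, since $\text{curv}$ is $G$-equivariant and $\omega$ is $G$-invariant, and by the very definition of $\partial$ it represents $\partial(\omega) \in H^1_{sm}(G; H^{n-1}(G/H;\R/\Z))$. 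Because $G$ is connected, it acts trivially on $H^{n-1}(G/H;\R/\Z)$, so principal crossed homomorphisms vanish identically; hence $\partial(\omega) = 0$ forces $g \cdot \sigma(\omega) = \sigma(\omega)$ for every $g \in G$. This produces a smooth homomorphism $\sigma : \ker(\partial) \to \widehat{H}^n(G/H)^G$ that is a right inverse to $\text{curv}^G$, which is already the desired splitting of \eqref{eq:GInvSES}.

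Pairing this section with the inclusion $j$ gives a homomorphism
$$\phi : H^{n-1}(G/H; \R/\Z) \times \ker(\partial) \longrightarrow \widehat{H}^n(G/H)^G, \qquad (\mu, \omega) \longmapsto j(\mu) + \sigma(\omega).$$
The source is a finite-dimensional abelian Lie group: $H^{n-1}(G/H;\R/\Z)$ is a union of cosets of a torus, and $\ker(\partial) \subseteq Z^n(\mathfrak{g},H;\R)_\Z$ is the kernel of a continuous linear map out of a finite-dimensional abelian Lie group. A short diagram chase shows $\phi$ is bijective: injectivity is immediate on applying $\text{curv}$, and for surjectivity any $h \in \widehat{H}^n(G/H)^G$ has $\text{curv}(h) \in \ker(\partial)$, while $h - \sigma(\text{curv}(h))$ is $G$-invariant and in $\ker(\text{curv}) = j(H^{n-1}(G/H;\R/\Z))$.

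The main technical step, and the only nontrivial one, is to verify that $\phi$ is a homeomorphism when the target is given the subspace topology inherited from $\widehat{H}^n(G/H)$; this simultaneously identifies the two candidate Lie group structures and completes the proof. Continuity of $\phi$ is immediate from smoothness of $j$ and $\sigma$. The inverse sends $h$ to $(h - \sigma(\text{curv}(h)), \text{curv}(h))$; its second component is obviously continuous, while the first defines a continuous map into $\widehat{H}^n(G/H)$ whose image lies in the subgroup $j(H^{n-1}(G/H;\R/\Z))$. Invoking condition a.\ of smooth exactness for \eqref{eq:ordcurv} from Lemma \ref{lem:TopExact}, namely that $j$ is a diffeomorphism onto a submanifold, the first component is continuous as a map into $H^{n-1}(G/H;\R/\Z)$. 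With this in place $\phi$ is a Lie group isomorphism, endowing $\widehat{H}^n(G/H)^G$ with the induced abelian Lie group structure and exhibiting the smooth splitting of \eqref{eq:GInvSES}.
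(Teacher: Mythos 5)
Your proof is correct, but it reaches the conclusion by a genuinely different route from the paper's. The paper first restricts the smooth principal $H^{n-1}(G/H;\R/\Z)$-bundle structure of the ambient curvature sequence (i.e.\ only the \emph{smooth exactness} part of Lemma \ref{lem:TopExact}) to the subspace $\ker(\partial)$, thereby obtaining the abelian Lie group structure on $\widehat{H}^n(G/H)^G$ as an extension of abelian Lie groups, and then splits this restricted extension from scratch: writing $\ker(\partial) \cong \Z^a \times \R^b$, it splits over $\Z^a$ by freeness and over $\R^b$ by splitting at the level of Lie algebras and exponentiating. You instead invoke the stronger ``smoothly split'' conclusion of Lemma \ref{lem:TopExact} to obtain a global smooth homomorphic section $\sigma$ of the ordinary curvature sequence and observe that it automatically restricts to a section of \eqref{eq:GInvSES}: since $G$ is connected it acts trivially on $H^{n-1}(G/H;\R/\Z)$, so the principal crossed homomorphisms vanish, the cocycle $g \mapsto g\cdot\sigma(\omega)-\sigma(\omega)$ is independent of the chosen section and literally equals $\partial(\omega)$ rather than merely representing it, and $\partial(\omega)=0$ forces $\sigma(\omega)$ to be invariant. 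This is a real economy --- the splitting is inherited rather than rebuilt, and your product map $\phi$ identifies the Lie group structure simultaneously --- at the price of leaning on connectedness of $G$ at the crucial step (harmless, since it is assumed throughout this part of the paper) and on the full splitting statement of Lemma \ref{lem:TopExact} rather than just smooth exactness. Your treatment of the topology is also sound: continuity of $\phi^{-1}$ via the fact that $j$ is a diffeomorphism onto a submanifold, and the observation that $\ker(\partial)$ is a closed subgroup of the finite-dimensional group $Z^n(\mathfrak{g},H;\R)_\Z$ (hence a Lie subgroup of the form $\Z^a\times\R^b$), are exactly the points that needed checking.
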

\begin{proof}
We first claim that with this induced topology \eqref{eq:GInvSES} is a principal $H^{n-1}(G/H; \R/\Z)$-bundle. By Lemma \ref{lem:TopExact} the curvature sequence is smoothly exact, so is a smooth principal $H^{n-1}(G/H; \R/\Z)$-bundle. It therefore remains a principal bundle when restricted to the subspace $\ker(\partial) \subset \Omega^n(G/H)_\Z$, which is \eqref{eq:GInvSES}.

Now $\ker(\partial)$ is an abelian Lie group, as is $H^{n-1}(G/H; \R/\Z)$, so as \eqref{eq:GInvSES} is a principal bundle it follows that $\widehat{H}^n(G/H)^G$ admits a a unique smooth structure (induced by a local trivialisation) making this sequence an extension of abelian Lie groups. The group $\ker(\partial)$ is isomorphic to $\Z^a \times \R^b$ for some $a$ and $b$. As $\Z^a$ is free abelian we can split the extension \eqref{eq:GInvSES} over this factor, and it remains to show that we can split it over the identity component $\R^b$ of $\ker(\partial)$. But this may be done by choosing a splitting at the level of Lie algebras and then exponentiating in the abelian Lie group $\widehat{H}^n(G/H)^G$.
\end{proof}

\begin{example}[Torus action by translations and a particle moving in a crystal]
On the torus, the volume form is invariant under translations. In local coordinates $(x_1,x_2)$ we have $\omega \, \propto \, dx_1 dx_2$. But $\iota_{\partial/\partial x_1} dx_1 dx_2 = dx_2$ and patching things together, we see that $\iota_{\partial/\partial x_1} \omega$ is a closed, but not exact 1-form on the torus for all non-zero $\omega$. Thus $\text{curv}^{T^2}$, whose target is isomorphic to $\Z$, is the zero map. The invariant differential cohomology of the torus in degree two is therefore $\widehat{H}^2(T^2)^{T^2} \cong H^1(T^2;\R/\Z) \cong \R/\Z \oplus \R/\Z$, with action given by two `theta terms' corresponding to the two independent non-trivial cycles on the torus. 

If we instead consider the action of, say, $\Z/n \oplus \Z/m \subset T^2$, then we find that $i_v \omega$ is integral (but not exact) for the vector fields $v$ whose flows reach the elements in $\Z/n \oplus \Z/m$. Thus the image of $\text{curv}^{\Z/n \oplus \Z/m}$ is the forms $\omega$ whose integral over $T^2$ is a multiple of $n$ and $m$. 

This set-up is realised physically by the quantum mechanics of a particle moving in a square crystal lattice in the presence of a uniform magnetic field. The failure of translation invariance was first observed by Manton \cite{Manton:1983mq}, via an explicit calculation of the wavefunctions corresponding to energy eigenstates, and the connection to topological actions was made in \cite{Davighi2018}. The putative construction of an invariant action described in \cite{DHoker:1994rdl} fails here because it requires an explicit choice of generators of homology $1$-cycles, which cannot be done in a way which is invariant under translations.  
\qed
\end{example}
Ref.~\cite{Davighi:2018xwn} discuss a number of other examples arising in physical theories in which the Higgs boson is composite, including one put forward in \cite{Gripaios:2016mmi} which fails to have the desired invariance properties.
%%%%%%%%%%%%%%%%%%%%%%%%%%%%%%%%%%%%%%%%
%%%%%%%%%%%%%%%%%%%%%%%%%%%%%%%%%%%%%%%%
\section{Gauging global symmetries}\label{sec:gau}
%%%%%%%%%%%%%%%%%%%%%%%%%%%%%%%%%%%%%%%%
Now let us turn to the issue of gauging global symmetries. Again we suppose that $G$ is compact. There are obvious natural maps
from all of the equivariant objects in the diagram (\ref{eq:character diagram EDC}) to the ordinary objects in the diagram (\ref{eq:character diagram DC}), obtained by forgetting the $G$-action. These maps are compatible with the maps in the diagram and moreover they factor through the invariant objects in the diagram (\ref{eq:character diagram IDC}).\footnote{More details will be given in a updated version of \cite{redden2016differential} to appear.} In this way, we obtain a natural map from $\widehat{H}^{\ast}_G(X)$, which classifies locally-symmetric physics actions on $X$, to $\widehat{H}^{\ast}(X)^G$, which classifies globally-symmetric physics actions on $X$. In simple terms, given a locally-symmetric action on $X$, we can obtain a globally-symmetric action, defined on each spacetime $M$ by evaluating it on a trivial bundle over $M$ with trivial connection. 

Now, this map is neither injective nor surjective, in general. As per our earlier comments regarding the lack of injectivity of the map $\widehat{H}_G^{\ast}(\mathrm{pt}) \to\widehat{H}^{\ast}_G(X)$ in (\ref{eq:MapToPt}), the lack of injectivity here does not admit an interpretation as a `pure gauge theory' contribution; rather, actions in the kernel are simply actions that vanish when both the bundle and connection are taken to be trivial. But the lack of surjectivity has a direct interpretation in terms of actions with global symmetries that cannot be gauged. As we already have remarked, this phenomenon has been observed more than once before. But the ability to compute it systematically using differential cohomology brings a new power to  studying 't Hooft anomaly matching in quantum field theory.

Thus, it is of interest to try to characterize both the kernel and cokernel of the map from equivariant to invariant differential cohomology in terms of the corresponding forgetful maps on the other objects in (\ref{eq:character diagram EDC}) and  (\ref{eq:character diagram IDC}). Here we find that the characterization is even more difficult than the characterization of invariant differential cohomology. To wit, the fact that the forgetful maps are compatible with the maps in the diagrams (\ref{eq:character diagram EDC}) and  (\ref{eq:character diagram IDC}) means that the kernels and cokernels also fit into analogous commutative diagrams, but now the exactness properties are weakened yet further. Indeed, the only tool we have available is the snake lemma. Applying this to, say, the diagonal sequence featuring the curv map (an analogous sequence is obtained for the char map) allows us to conclude only that the sequence
\begin{multline}
0 \to \text{ker} \; U_GH^{n-1}(X;\R/\Z) \to \text{ker} \;  U_G\widehat{H}^n(X)\to \text{ker} \;  U_G\Omega^n(X)_\Z \to \\ \text{coker} \; U_GH^{n-1}(X;\R/\Z)  \to \text{coker} \; U_G\widehat{H}^n(X) \to \text{coker} \; U_G\Omega^n(X)_\Z
\end{multline}
is exact, where we denote the forgetful map at object $A$ by $U_GA$. 
Evidently, this sequence constrains the kernel of the map $U_G\widehat{H}(X)$ rather more than it does the cokernel. However, if we are able to characterise the image of the map $\text{curv}^G$, as we did for special $G$-actions in the previous Section, then we are able to strengthen the snake lemma to
\begin{multline}
0 \to \text{ker}\; U_GH^{n-1}(X;\R/\Z) \to \text{ker} \; U_G\widehat{H}^n(X)\to \text{ker} \; U_G\Omega^n(X)_\Z \to \\ \text{coker} \; U_GH^{n-1}(X;\R/\Z)  \to \text{coker} \; U_G\widehat{H}^n(X) \to \text{coker}\; U_G\Omega^n(X)_\Z|_{\im \text{curv}^G} \to 0
\end{multline}
such that the cokernel of the map $U_G\widehat{H}^n(X)$ is similarly constrained. 
The next example shows, however, that the connecting homomorphism does not vanish, in general.
%%%%%%%%%%%%%%%%%%%%%%%%%%%%%%%%%%%%%%%%
\begin{example}[Translations on the circle]\label{sec:ugu12}
%%%%%%%%%%%%%%%%%%%%%%%%%%%%%%%%%%%%%%%%
 In the case of $U(1)$ acting on itself by left translation, the diagram for equivariant differential cohomology in degree two reads
\begin{equation} \label{eq:character diagram u1edc}
\begin{tikzcd}[row sep=scriptsize,column sep=tiny]
{} & {} & 0 \arrow[rr] \arrow[dr, hookrightarrow] & {} & 0 \arrow[dr] & {} & {} \\ 
{} & 0 \arrow[ur] \arrow[dr] & {} &  \R \arrow[ur,twoheadrightarrow] \arrow[dr,twoheadrightarrow] & {} & 0 & {} \\
{} & {} & \R \arrow[rr] \arrow[ur, hookrightarrow] & {} & \R \arrow[ur] & {} & {}
\end{tikzcd}
\end{equation}
while the diagram for invariant differential cohomology reads
\begin{equation} \label{eq:character diagram u1idc}
\begin{tikzcd}[row sep=scriptsize,column sep=tiny]
{} & {} & \R/\Z \arrow[rr] \arrow[dr, hookrightarrow] & {} & 0 \arrow[dr] & {} & {} \\ 
{} & \R \arrow[ur] \arrow[dr] & {} &  \R/\Z \arrow[ur] \arrow[dr,] & {} & 0 & {} \\
{} & {} & \R/\Z \arrow[rr] \arrow[ur, hookrightarrow] & {} & 0 \arrow[ur] & {} & {}
\end{tikzcd}.
\end{equation}
Here we get lucky in that the commutativity properties fix all of the forgetful maps between the diagrams. We find that the kernels thus fit into the commutative diagram
\begin{equation} \label{eq:character diagram u1ker}
\begin{tikzcd}[row sep=scriptsize,column sep=tiny]
{} & {} & 0 \arrow[rr] \arrow[dr] & {} & 0 \arrow[dr] & {} & {} \\ 
{} & 0 \arrow[ur] \arrow[dr] & {} &  \Z \arrow[ur] \arrow[dr] & {} & 0 & {} \\
{} & {} & \Z \arrow[rr] \arrow[ur] & {} & \R \arrow[ur] & {} & {}
\end{tikzcd}
\end{equation}
while the cokernels fit into the commutative diagram
\begin{equation} \label{eq:character diagram u1coker}
\begin{tikzcd}[row sep=scriptsize,column sep=tiny]
{} & {} & \R/\Z \arrow[rr] \arrow[dr] & {} & 0 \arrow[dr] & {} & {} \\ 
{} & \R \arrow[ur] \arrow[dr] & {} &  0 \arrow[ur] \arrow[dr] & {} & 0 & {} \\
{} & {} & 0 \arrow[rr] \arrow[ur] & {} & 0 \arrow[ur] & {} & {}
\end{tikzcd}.
\end{equation}
In particular, the snake lemma for the $\text{curv}$ sequence reduces to the statement that
\begin{equation}
0 \to 0 \to \Z \to \R \to  \R/\Z \to 0 \to 0
\end{equation}
is exact. Whilst this is certainly true, one sees that the connecting homomorphism does not vanish, so we cannot expect a decoupling of the kernels from the cokernels, in general. \qed
\end{example}
In physics terms, we see that there is no sense in which one can think of the possible topological actions for a gauge theory with matter fields living in $X$ as a `sum' of the topological actions for the ungauged theory and the pure gauge theory, as one's physical intuition might suggest. In some cases (such as $U(1)$ acting on itself by left translation), this intuition is nearly correct, in that these contributions fit into a non-split short exact sequence, but more generally we have neither a surjection on the right nor an injection on the left. 

Let us now compare with earlier results \cite{Jack:1989ne,Hull:1989jk,HULL1991379,witten1992,1993JGP10381W,Figueroa-OFarrill:1994vwl}, which linked the obstruction to gauging global symmetries to the obstruction to finding a closed equivariant extension to a given integral form in $\Omega^\ast(X)^G_\Z$. Our example shows that the true situation is rather more subtle. Indeed, since the map $\text{curv}^G$ does not surject in general, we cannot even conclude that $\text{coker} \; U_G\widehat{H}^\ast(X) \to \text{coker} \; U_G\Omega^\ast(X)_\Z$ surjects, let alone that it is an isomorphism. By replacing $\Omega^\ast(X)^G_\Z$ by the image of $ \text{curv}^G$ (which we have characterized in special cases), we do obtain a surjection in the snake lemma. To establish injectivity or otherwise, we must also consider the effect of the map from $H^{\ast-1}_G(X;\R/\Z)$ to $H^{\ast-1}(X;\R/\Z)^G$, along with the kernels of the other maps. Finally, even when the map $\text{coker} \; U_G\widehat{H}^\ast(X) \to \text{coker} \; U_G\Omega^\ast(X)_\Z$ is an isomorphism, we must take note that it is not sufficient to seek closed equivariant extensions of the forms in $\Omega^\ast(X)^G_\Z$. Indeed, there is a further `integrality' condition, in that the closed equivariant extensions must lie in classes whose image is in the image of $H^\ast_G(X;\Z)$.
%%%%%%%%%%%%%%%%%%%%%%%%%%%%%%%%%%%%%%%%
\begin{example}[$SO(3)$ rotation symmetry in quantum mechanics]
%%%%%%%%%%%%%%%%%%%%%%%%%%%%%%%%%%%%%%%%
We have already discussed the topological terms arising for rigid bodies and the charge-monopole system with a gauged $SO(3)$ rotation symmetry. To see that there is an obstruction to gauging global symmetries using differential cohomology, it remains to compute the invariant differential cohomology in degree two. But since $G=SO(3)$ is connected and has simple Lie algebra, it follows that $\text{curv}^G$ surjects. Moreover, since $G$ acts transitively, the short exact sequence of Lie groups involving $\text{curv}^G$  splits smoothly. Thus we find that for the rigid body we have $\widehat{H}^2(SO(3))^{SO(3)} \cong H^1(SO(3);\R/\Z) \times \Omega^2(SO(3))_\Z^{SO(3)} \cong \Z/2 \times \R^3$, while for the charge-monopole system $\widehat{H}^2(S^2)^{SO(3)} \cong \Omega^2(S^2)_\Z^{SO(3)} = \Z$. Comparing with $\widehat{H}_{SO(3)}^2(SO(3)) \cong \R^3$ and $\widehat{H}_{SO(3)}^2(S^2) = 2\Z$, we see that in both cases there is an obstruction to gauging, despite the fact that there is no difficulty in finding closed equivariant extensions of the corresponding invariant curvature forms. 

Since the energy eigenstates in quantum mechanics in both cases carry genuine representations of the universal cover $SU(2)$ of $SO(3)$, we expect the obstructions to gauging to disappear if we consider instead the actions of $SU(2)$ on either $X=SO(3)$ or $X=S^2$. Because $H^2(BSU(2);\Z) \cong H^3(BSU(2);\Z) \cong 0$, the Serre exact sequence yields isomorphisms $H^2_{SU(2)}(X;\Z) \cong H^2(X;\Z) \cong H^2(X;\Z)^{SU(2)}$, so there is indeed no obstruction to gauging $SU(2)$. 
\qed
%%%%%%%%%%%%%%%%%%%%%%%%%%%%%%%%%%%%%%%%
\end{example}
%%%%%%%%%%%%%%%%%%%%%%%%%%%%%%%%%%%%%%%%
%%%%%%%%%%%%%%%%%%%%%%%%%%%%%%%%%%%%%%%%
\begin{example}[$SO(2)$ rotation symmetry in quantum mechanics]
%%%%%%%%%%%%%%%%%%%%%%%%%%%%%%%%%%%%%%%%
In our {\em ad hoc} study of the Dirac monopole in \S \ref{sec:dirac}, we considered only an $SO(2) \cong S^1$ subgroup of the rotation symmetry. Let us check that using differential cohomology yields the same result obtained there. This turns out to be somewhat trickier than for $SO(3)$, because the map $U_{S^1} H^2(S^2,\Z)$ does in fact surject (for $SO(3)$ it was multiplication by 2). We thus must work slightly harder to identify the equivariant forms in $\Omega_{S^1}^2(S^2)_{\Z}$.

We begin by computing the equivariant cohomology $H^2_{S^1}(S^2;\Z)$ via the Serre spectral sequence for the fibration $S^2 \to ES^1 \times_{S^1} S^2 \to BS^1$, which reads
$$ H^1 (S^2;\Z) \to H^2(BS^1;\Z) \to H^2_{S^1} (S^2;\Z) \to H^2(S^2;\Z) \to H^3(BS^1;\Z),$$
that is,
$$ 0 \to \Z \to H^2_{S^1} (S^2;\Z) \to \Z \to 0.$$
This tells us not only that $H^2_{S^1}(S^2;\Z) \cong \Z \oplus \Z$ but also that it is generated by a class $\bar{u}$ restricting to a generator $u$ of $H^2(S^2 ; \Z$) and the first Chern class $c_1$ pulled back from $BS^1$. Now $c_1$ is canonical but $\bar{u}$ is a choice, which needs to be normalised as $\bar{u} + X c_1$ also restricts to $u \in H^2(S^2 ; Z)$. We can choose the normalisation by insisting that $s^* \bar{u}=0,$
where $s:\text{pt} \to S^2$ is the equivariant map given by inclusion of the South pole. These equivariant integral cohomology classes $\bar{u}$ and $c_1$ determine real cohomology classes which we denote by the same symbols.

Now consider the equivariant volume form $\bar{\omega} = \omega - x_1 dt/2$ introduced in \S \ref{sec:dirac}. This restricts to $u \in H^2(S^2 ; \R)$, and pulls back to $dt/2 =  c_1/2$ under $s^*$. Thus with our choice of normalisation we have
    $$[\bar{\omega}] = \bar{u} + c_1/2 \in H^2_{S^1}(S^2 ; \R).$$
In particular this is not in the lattice generated by $\bar{u}$ and $c_1$, so is not in the image of integral equivariant cohomology in real equivariant cohomology. (Note a different normalisation of $\bar{u}$ changes it by an {\em integral} multiple of $c_1$, so this fact does not depend on the choice of normalisation.)
\qed
%%%%%%%%%%%%%%%%%%%%%%%%%%%%%%%%%%%%%%%%
\end{example}
%%%%%%%%%%%%%%%%%%%%%%%%%%%%%%%%%%%%%%%%
%%%%%%%%%%%%%%%%%%%%%%%%%%%%%%%%%%%%%%%%
\begin{example}[WZNW models in dimension 2 and non-abelian bosonization]
%%%%%%%%%%%%%%%%%%%%%%%%%%%%%%%%%%%%%%%%
%%%%%%%%%%%%%%%%%%%%%%%%%%%%%%%%%%%%%%%%
In Ref.~\cite{Witten:1983ar}, Witten showed how the bosonic sigma model based on homogeneous space $X=O(n)\times O(n)/O(n)$ with the usual kinetic term and a WZNW term is dual, for suitable values of the couplings, to the theory of $n$ free Majorana fermions, with $O(n)\times O(n)$ representing the fermionic chiral symmetries. As such, Witten remarked, there ought to be an obstruction to gauging $O(n)\times O(n)$.
This is easily seen
 directly using equivariant differential cohomology. Indeed, the gaugeable WZNW terms are those whose equivariant extension has image in real equivariant cohomology in the image of integer equivariant cohomology. This image is the zero element, because $H^3_{O(n)\times O(n)} (O(n)\times O(n)/O(n);\Z) = H^3 (BO(n);\Z)$. But the cohomology in odd degrees of $BG$ is pure torsion for any compact Lie group $G$, an old result of Borel which can be seen more directly via de Rham's theorem from the Cartan complex $\Omega^\ast_G(\mathrm{pt})$, which is concentrated in even degrees.
\qed
%%%%%%%%%%%%%%%%%%%%%%%%%%%%%%%%%%%%%%%%
\end{example}
%%%%%%%%%%%%%%%%%%%%%%%%%%%%%%%%%%%%%%%%
%%%%%%%%%%%%%%%%%%%%%%%%%%%%%%%%%%%%%%%%
%%%%%%%%%%%%%%%%%%%%%%%%%%%%%%%%%%%%%%%%
%%%%%%%%%%%%%%%%%%%%%%%%%%%%%%%%%%%%%%%%
%%%%%%%%%%%%%%%%%%%%%%%%%%%%%%%%%%%%%%%%
%%%%%%%%%%%%%%%%%%%%%%%%%%%%%%%%%%%%%%%%
%%%%%%%%%%%%%%%%%%%%%%%%%%%%%%%%%%%%%%%%
%%%%%%%%%%%%%%%%%%%%%%%%%%%%%%%%%%%%%%%%
%%%%%%%%%%%%%%%%%%%%%%%%%%%%%%%%%%%%%%%%
%%%%%%%%%%%%%%%%%%%%%%%%%%%%%%%%%%%%%%%%
\subsection*{Partial gauging}
%%%%%%%%%%%%%%%%%%%%%%%%%%%%%%%%%%%%%%%%
It is common in physics to study situations in which only a proper subgroup $K$ of a global symmetry $G$ acting on $X$ is gauged, a classic example being the gauged electromagnetic symmetry of the chiral lagrangian describing hadrons at low energies. 

The corresponding topological actions are given by $U_K^{-1} (\widehat{H}^{n+1}(X)^G)$, which turns out to be difficult to characterize in general. When $K$ is a normal subgroup of $G$ (or more generally when $G$ acts on $K$) then the notion of $G$-invariants of $K$-equivariant objects makes sense and we have that $\widehat{H}_K^\ast (X)^G \subset U_K^{-1} (\widehat{H}^\ast (X)^G)$ (and similarly for any of the corresponding objects in the diagram (\ref{eq:character diagram EDC})). This offers us the chance of characterizing at least some of the possible topological actions. In particular, in the case where $G$ is connected, we have that every element $g \in G$ (or at least in a generating set) can be reached from the identity by a $K$-equivariant homotopy such that $H_K^*(X)^G \cong H_K^*(X)$; furthermore the homotopy formula in $K$-equivariant differential cohomology \cite{kubel1510equivariant} leads to the conclusion that the image of the map $\text{curv}_K^G$ consists of the $K$-equivariant differential forms $\omega \in \Omega_K^*(X)_\Z^G$ such that $\iota_v \omega$ is $K$-equivariantly exact for all $v \in \mathfrak{g}$. We thus obtain a simple short exact sequence characterizing $\widehat{H}_K^\ast (X)^G \subset U_K^{-1} \widehat{H}^\ast (X)^G)$.

Once again, we find that the situation for topological actions is somewhat more complicated than expected. For the usual non-topological action of a sigma model, gauging a subgroup $K$ of a global symmetry $G$ yields a residual symmetry given by the normalizer $N$ of $K$ in $G$ \cite{Gripaios:2015qya}. Were this to carry over to the topological case, we would expect an isomorphism between $U_K^{-1}(\widehat{H}^\ast X^G)$ and $\widehat{H}_K^\ast (X)^N$, but in fact there is no suitable natural map in either direction.

Despite the lack of general theorems, one may still proceed in an {\em ad hoc} fashion, as our final example shows.
%%%%%%%%%%%%%%%%%%%%%%%%%%%%%%%%%%%%%%%%
\begin{example}[WZNW models in dimension 2 and the connection to anomalies]
%%%%%%%%%%%%%%%%%%%%%%%%%%%%%%%%%%%%%%%%
Ref.~\cite{witten1992} shows how the obstruction to gauging topological actions for a class of sigma models in dimension 2 is related to perturbative anomalies arising from one-loop diagrams involving fermions in putative ultraviolet descriptions. In that set-up, $X$ is taken to be a compact, simple, and 2-connected Lie group, while $K$ is a simple, 1-connected\footnote{Ref. \cite{witten1992} does not explicitly state that $K$ (written there as $F$) must be connected, but this is needed for the arguments that follow.} subgroup of the product $G=X \times X$ which acts on $X$ by left and right translations.

Let us now recover the results obtained in \cite{witten1992} using differential cohomology. Focussing first on $K$-equivariance, because $K$ is 0-connected, we have $H^2(X;\R/\Z)^K = H^2(X;\R/\Z)$, which in turn vanishes because $X$ is 2-connected. Hence $\widehat{H}^3_K(X) \cong \Omega^3_K(X)_\Z$. We similarly have that $H^2(X;\R/\Z)^G =0$ and, because 
$G$ connected and is semi-simple, the map $\text{curv}^G$ surjects, yielding $\widehat{H}^3(X)^G \cong \Omega^3(X)^G_\Z$. So the problem reduces to studying ($K$-equivariant and $G$-invariant) integral forms.

For seeing the connection to anomalies, it suffices to consider the integral forms on $X$ that are not exact. Each of these corresponds to a class in $H^3(X,\Z) \cong \Z$, for which a generator may be constructed as follows \cite{pressleysegal}. Because $X$ is simple, there is a unique $G=X\times X$-invariant inner product $\langle , \rangle$ on the Lie algebra $\mathfrak{x}$ of $X$ and from this we construct a canonical $G$-invariant integral 3-form $\phi = \frac{1}{24\pi}  \langle \theta^L, [\theta^L,\theta^L] \rangle $ on $X$ from the (left, say) Maurer-Cartan form $\theta^L$.

Now, the extension $\phi_G(v)= \phi + \frac{1}{4\pi}  \langle \theta^L + \theta^R, v \rangle $ is $K$-equivariantly closed iff. $\langle v_L , v_L \rangle =\langle v_R , v_R \rangle $ for all $v \in \mathfrak{k}$, where $v_{L,R}$ are the projections on to the left and right factors of $v \in \mathfrak{k} \subset \mathfrak{x} \oplus \mathfrak{x}$. This coincides with the condition that the local (a.k.a. perturbative) anomalies in the fermionic high-energy description cancel, as explained in \cite{witten1992}.

But now we know that one must go beyond \cite{witten1992} and check the integrality condition, namely that the image of $\phi_G$ in $H^3_K(X;\R)$ is in the image of $H^3_K(X;\Z)$. We leave this for future work.
\qed
%%%%%%%%%%%%%%%%%%%%%%%%%%%%%%%%%%%%%%%%
\end{example}
%%%%%%%%%%%%%%%%%%%%%%%%%%%%%%%%%%%%%%%%
%%%%%%%%%%%%%%%%%%%%%%%%%%%%%%%%%%%%%%%%
\subsection*{Acknowledgments}
We thank Daniel Freed, Corbett Redden, Yuji Tachikawa, and Edward Witten for correspondence and Christian B\"ar, Avner Karasik, Nakarin Lohitsiri, David Tong, and Carl Turner for discussions. JD and BG are supported by STFC consolidated grant ST/P000681/1, and BG is supported by King’s College, Cambridge.

\bibliography{references}
\bibliographystyle{utphys}

\end{document}